\newtheorem{theorem}{Theorem}
\newtheorem{lemma}[theorem]{Lemma}
\newtheorem{fact}[theorem]{Fact}
\newtheorem{corollary}[theorem]{Corollary}
\newtheorem{definition}[theorem]{Definition}
\renewcommand{\mid}{\;\middle\vert\;}
\def\*#1{\mathbf{#1}} \def\+#1{\mathcal{#1}} \def\-#1{\mathrm{#1}} \def\^#1{\mathbb{#1}} \def\$#1{\mathtt{#1}}
\renewcommand{\Pr}[2][]{ \ifthenelse{\isempty{#1}}
  {\mathbf{Pr}\left[#2\right]} {\mathbf{Pr}_{#1}\left[#2\right]} }
\newcommand{\E}[2][]{ \ifthenelse{\isempty{#1}}
  {\mathbf{E}\left[#2\right]}
  {\mathop{\mathbf{E}}_{#1}\left[#2\right]} }
\newcommand{\Var}[2][]{ \ifthenelse{\isempty{#1}}
  {\mathbf{Var}\left[#2\right]}
  {\mathop{\mathbf{Var}}_{#1}\left[#2\right]} }
  \newcommand\figcaption{\def\@captype{figure}\caption}
  \newcommand\tabcaption{\def\@captype{table}\caption}
\newcommand\latinabbrev[1]{
  \peek_meaning:NTF . {
    #1\@}%
  { \peek_catcode:NTF a {
      #1.\@ }%
    {#1.\@}}}
\def\ie{\latinabbrev{i.e}}
\newcommand{\ALG}{\textup{\sf ALG}\xspace}
\newcommand{\OPT}{\textup{\sf OPT}\xspace}
\newcommand{\Batch}{\textup{\sf Batch-Rent}\xspace}
\newcommand{\Oracle}{\textup{\sf Oracle}\xspace}
\newcommand{\CoOPT}{\textup{\sf coOPT}\xspace}
\newcommand{\EDF}{\textup{\sf EDF}\xspace}
\newcommand{\Semi}{\textup{\sf SemiOnline}\xspace}
\newcommand{\calJ}{\mathcal{J}}
\title{Improved Algorithms for Online Rent Minimization Problem Under Unit-Size Jobs}
\author{Enze Sun \thanks{The University of Hong Kong. Email: sunenze@connect.hku.hk.} \and
Zonghan Yang \thanks{Shanghai Jiao Tong University. Email: fstqwq@sjtu.edu.cn.} \and
Yuhao Zhang \thanks{Shanghai Jiao Tong University. Email: zhang\_yuhao@sjtu.edu.cn.}
}
\date {}
\renewcommand{\subparagraph}{\paragraph}
\begin{document}

\maketitle

\begin{abstract}
    We consider the Online Rent Minimization problem, where online jobs with release times, deadlines, and processing times must be scheduled on machines that can be rented for a fixed length period of $T$. The objective is to minimize the number of machine rents. This problem generalizes the Online Machine Minimization problem where machines can be rented for an infinite period, and both problems have an asymptotically optimal competitive ratio of $O(\log(p_{\max}/p_{\min}))$ for general processing times, where $p_{\max}$ and $p_{\min}$ are the maximum and minimum processing times respectively. However, for small values of $p_{\max}/p_{\min}$, a better competitive ratio can be achieved by assuming unit-size jobs. Under this assumption, Devanur et al. (2014) gave an optimal $e$-competitive algorithm for Online Machine Minimization, and Chen and Zhang (2022) gave a $(3e+7)\approx 15.16$-competitive algorithm for Online Rent Minimization. In this paper, we significantly improve the competitive ratio of the Online Rent Minimization problem under unit size to $6$, by using a clean oracle-based online algorithm framework. 
\end{abstract}

\section{Introduction}

\emph{Machine Minimization} is a classical scheduling problem in combinatorial optimization. We are given $n$ jobs with release time and deadline to schedule. Each job $j$ has a length $p_j$ and must be assigned to a machine for $p_j$ units of time between its release time $r_j$ and its deadline $d_j$. However, in many practical scenarios, such as cloud computing, we may not need to buy the machines but only rent them for a fixed period of time. This motivates the \emph{Rent Minimization} problem, introduced by Saha~\cite{saha_renting_2013}. In this problem, we are given a constant $T$, which represents the duration of a machine rent. The goal is to minimize the number of rents we make to process all jobs within their deadlines.

Another related formulation, inspired by nuclear weapon testing, is the \emph{Calibration} problem, proposed by Bender et al.~\cite{bender_efficient_2013}. In this problem, we are given $m$ machines and a set of jobs that must be completed feasibly. However, before using a machine, we need to \emph{calibrate} it. Each calibration, similar to a rent, activates the machine for a time period of $T$. The goal is to minimize the number of calibrations to process all jobs on time. The main difference between the Calibration and the Rent Minimization problems is that the former restricts us to have at most $m$ machines working in parallel at any given time, while the latter does not have such a constraint. Therefore, the Rent Minimization problem can be regarded as a special case of the Calibration problem when $m=\infty$.

On the other hand, in the cloud rental scenario and many other practical applications, the computing requests usually increase over time and can be modeled as online released jobs. Therefore, we investigate the problem in an online setting. We do not have any prior knowledge about the jobs before their release time, and need to schedule jobs and rent machines online and irrevocably over time. The goal is to minimize the total number of \emph{rents} for scheduling all jobs. Note that the online generalization is also studied in the Calibration problem by Chen and Zhang~\cite{chen_online_2022}. To ensure that online algorithms can schedule all jobs, they also assume $m=\infty$ in their model, which coincides with the Online Rent Minimization model.

\subparagraph*{Why consider unit-size jobs?} Saha~\cite{saha_renting_2013} proposes an $O(\log{(p_{\max}/p_{\min})})$-competitive algorithm for the Online Machine Minimization problem. By paying a constant factor, it can be extended to an $O(\log{(p_{\max}/p_{\min}))}$-competitive algorithm for the Online Rent Minimization problem. ($p_{\max}$ and $p_{\min}$ are the longest and shortest processing time among all jobs.), which was proved to be the best competitive ratio asymptotically. However, in many real-world applications, one company usually receives similar length requests, so the ratio between $p_{\max}$ and $p_{\min}$ may not be too large; and it is worthwhile to reduce the constant factor of the  ratio when $p_{\max}/p_{\min}$ is small. To this end, we focus on the special case of unit-size jobs (i.e., all $p_j=1$).  Note that by partitioning jobs by their length into $\log(p_{\max}/p_{\min})$ groups, the $\alpha$-competitive unit-size algorithm can be extended to a roughly $(3\alpha \log{(p_{\max}/p_{\min})})$-competitive algorithm in the general case.

The unit-size special case has also been considered in the Online Machine Minimization problem~\cite{chen_online_2022, devanur_online_2014, DBLP:conf/isaac/KaoCRW12}. Devanur et al.~\cite{devanur_online_2014} present an $e$-competitive algorithm for the Online Machine Minimization problem under unit-size jobs (though earlier work by Bansal et al. ~\cite{10.1145/1206035.1206038} implies the same result), and it is the optimal ratio among all deterministic algorithms. Current best lower bound of the online renting problem under unit-size jobs is also $e$ since it is a generalized model. On the positive side, Chen and Zhang~\cite{chen_online_2022} study the online renting problem under unit-size jobs. They improve the implicit constant ratio by Saha~\cite{saha_renting_2013} to $3e+7\approx 15.16$. In their algorithm, jobs are distinguished by whether they are long or short based on the length of their time window (i.e., $d_j-r_j$) and are handled separately. Our paper significantly improves the competitive ratio to $6$ with a cleaner oracle-based algorithm without identifying whether jobs are short or long.

\begin{theorem}
\label{thm:main}
There exists an efficient $ 6$-competitive algorithm for the online renting problems under unit-size jobs. 
\end{theorem}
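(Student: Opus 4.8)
The plan is to design a deterministic online algorithm built around an offline \Oracle that, given only the jobs released so far, returns a feasible assignment of those jobs to time slots together with a certificate lower-bounding the number of rentals any solution must pay. Throughout I work with integer time and unit jobs, so that a single rental is an interval of length $T$ offering one unit of processing in each of its $T$ slots. The core lower bound is the familiar interval-demand bound for unit jobs: for any window $[a,b)$, every job $j$ with $a \le r_j$ and $d_j \le b$ must be processed inside $[a,b)$, and since each rental contributes at most $\min(T,\,b-a)$ useful slots to that window, this yields a family of lower bounds on \OPT that the \Oracle maintains incrementally using \EDF.

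First I would fix the algorithm's renting discipline via \Batch: rentals are only ever started at times lying on a fixed grid (multiples of $T$, or of $T/2$), and whenever the currently active machines cannot feasibly absorb the released jobs under \EDF, a batch of fresh machines is rented for the upcoming grid-aligned window. Grid alignment converts the awkward, arbitrarily placed rental windows into a clean block structure, which is what lets the analysis reason about each block independently; the price is a bounded constant factor, paid once. Jobs are then dispatched greedily by earliest deadline among the machines active at each slot.

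The analysis proceeds through the intermediate \Semi model and the benchmark \CoOPT. I would (i) show that the true online \ALG rents at most a constant times what \Semi rents, where \Semi is allowed to react directly to the demand profile exposed by the \Oracle and uses a doubling rule on the number of machines to avoid over-provisioning --- this is where the $e$-competitive machine-minimization idea is imported, but adapted so that a machine whose window expires is re-rented rather than kept forever; and (ii) show that \Semi (equivalently \CoOPT) rents at most a constant times \OPT by charging each batch of rentals to a violated interval-demand bound of the \Oracle. Choosing the grid and the doubling parameter so that the two constants multiply to $6$ (for instance $2 \times 3$) gives the claimed ratio, and efficiency is immediate because every \Oracle call is a single \EDF feasibility computation on the released jobs.

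The main obstacle is the boundary interaction between rental expiry and the online machine count tying together steps (i) and (ii). Unlike machine minimization, a job released just before a window closes can force an entire new length-$T$ rental even though abundant capacity existed an instant earlier, so the instantaneous interval demand that lower-bounds \OPT can be far smaller than the number of rentals \ALG is driven to make. Controlling this --- proving that batching together with the doubling rule never lets re-rentals across window boundaries compound beyond the target constant, and that the interval-demand charging in step (ii) loses at most a factor of $3$ --- is the crux of the argument and the place where the previous $3e+7$ bound must be tightened to $6$.
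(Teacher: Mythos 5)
Your proposal has the right silhouette --- an offline oracle queried online, \EDF for dispatching, interval-demand (Hall-type) lower bounds, and a per-batch charging argument --- but several of the ideas that actually make the paper's argument close are missing or replaced by steps that would not go through as described.

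First, you feed the oracle ``the jobs released so far.'' This is precisely the trap the paper has to avoid: with $T=10$ and two jobs due at time $100$ released at times $0$ and $90$, an oracle seeing the first job at time $0$ already reports one rental, so the online algorithm spends its batch at time $0$; at time $90$ it must rent again while the oracle need not increase, and the competitive ratio is unbounded. The fix in the paper is to feed the oracle only jobs that are both visible ($r_j\le t$) and \emph{emergent} ($d_j\le t+T$), and the entire monotonicity structure (jobs entering the oracle's input in non-decreasing order of $\tau_j=\max\{r_j,d_j-T\}$) is built on that filter. Without it, your step (i) fails. Second, monotonicity of the oracle is not something you can wave at: the paper points out that natural $2$-approximation oracles (e.g.\ Lazy-Binning) are \emph{not} online monotone, and non-monotonicity breaks the accounting that equates the number of online batches with the oracle's final cardinality. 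Your oracle is described only as ``\EDF plus interval-demand certificates,'' and you give no reason it is monotone; the paper instead constructs a specific semi-online oracle with augmentation factor $3$ (rents of length $3T$) whose output is provably \emph{nested} over time, and bounds its cardinality by \OPT via a potential argument on $\CoOPT$ (note $\Semi$ and $\CoOPT$ are not ``equivalent'' --- $\CoOPT$ is the potential used to show $|\Semi|\le\OPT$). Third, your grid alignment and doubling rule are not what produces the $6$: the paper's batch is a fixed $6$ machines per oracle increment, split as $4$ starting at $t$ and $2$ starting at $t+T$ to emulate the oracle's $[t-T,t+2T)$ rent that the online algorithm cannot start in the past, and the feasibility proof is a geometric partition of an arbitrary window $[r^*,d^*)$ into dyadically shrinking sub-intervals with a per-batch charging of job counts against active units. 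Your ``constants multiply to $6$ (for instance $2\times 3$)'' has no derivation behind it, and the ``boundary interaction'' you correctly identify as the crux is exactly the part you leave unresolved. As it stands the proposal is a plausible research plan, not a proof.
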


\subparagraph*{Our techniques.} In the work of Chen and Zhang~\cite{chen_online_2022}, they rent machines for long and short jobs separately; as a result, their final competitive ratio is the sum of two cases, which makes the ratio large. 
The technical reason behind this result is that they use the $e$-competitive Online Machine Minimization algorithm by Devanur et al.~\cite{devanur_online_2014} as a black box, which is only suitable for \emph{short} jobs. (Roughly speaking, it is because we can view $T=\infty$ when jobs are short.) 
Therefore, they must use another approach to handle long jobs.

In our paper, we formalize and extend the Online Machine Minimization algorithm to an oracle-based framework, instead of using the algorithm as a black box. The oracle-based framework uses an offline algorithm to guide our online decision. Note that the Online Machine Minimization algorithm also uses an efficient offline optimal algorithm as an oracle. However, we do not know a polynomial offline Rent Minimization algorithm for unit-size jobs. 
The main algorithmic novelty is that we find an efficient substitute for the optimal algorithm to act as a bridge between online decisions and the optimal offline solution. 
The oracle is a kind of optimal augmentation algorithm. It is allowed to use a rent length of $3T$, and the rent number is at most $\OPT$ with rent length $T$. It also satisfies some online monotone properties so that we can control the cost of the online algorithm. 
Finally, we prove that by following the offline oracle and paying a factor of $6$ online, we can recover the same ability for scheduling jobs as the offline oracle. This concludes the competitive ratio of $6$.

\subparagraph*{Extension to the model with delay.} Chen and Zhang~\cite{chen_online_2022} also raise a perspective that the operation \emph{rent} (a.k.a. \emph{calibration} in their paper) needs a non-negative time $\lambda$ to finish. We call it \emph{Online Rent Minimization with Delay}. They propose an $\left(3(e+1)\lambda + 3e+7\right)\approx (11.15\lambda+15.16)$-competitive algorithm. We use a black box reduction to extend the algorithm in \Cref{thm:main} and improve the ratio to $6 (\lambda+1)$.

\begin{restatable}{theorem}{timecrit}
\label{thm:timecritical}
As a corollary of \Cref{thm:main}, there exists an efficient $6(\lambda+1)$-competitive algorithm when we need $\lambda$ time to finish each rent. 
\end{restatable}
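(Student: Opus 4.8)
The plan is to treat the $6$-competitive algorithm of \Cref{thm:main} as a black box and to lose only an extra multiplicative factor of $(\lambda+1)$ to the startup delay. Let $A$ and $O$ denote the number of rents used by the algorithm we build and by the optimum in the delay model, and let $A_0$ and $O_0$ denote the corresponding quantities for the algorithm of \Cref{thm:main} and the optimum in the original delay-free model, run on the very same job instance. I would prove the three inequalities $A \le (\lambda+1)\,A_0$, $A_0 \le 6\,O_0$, and $O_0 \le O$, and then chain them into $A \le (\lambda+1) A_0 \le 6(\lambda+1) O_0 \le 6(\lambda+1) O$, where the middle step is exactly \Cref{thm:main}. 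Thus the only genuinely new work is in the two outer inequalities.

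The inequality $O_0 \le O$ is the easy direction. In the delay model a rent issued at time $s$ has usable window $[s+\lambda,\,s+\lambda+T]$; reading that same window as a delay-free rent issued at time $s+\lambda$ turns any feasible delay schedule into a feasible delay-free schedule that assigns every job to the same slot and uses the same number of rents, so $O_0 \le O$. For the inequality $A \le (\lambda+1) A_0$ I would run the algorithm of \Cref{thm:main} in real time on the actual instance and, whenever it rents a machine at time $t$ (usable $[t,t+T]$ in the delay-free world), issue a \emph{main} delay rent at the same time $t$, whose real usable window is $[t+\lambda,\,t+\lambda+T]$. Every job that the black box places at a time $\tau\in[t+\lambda,\,t+T]$ can then be run on this machine at its original time $\tau$, and it meets its deadline because the black box's schedule did and we did not move it. The only difficulty is the startup shadow $[t,\,t+\lambda)$: jobs placed there have no usable machine yet. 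Since jobs are unit size, at most $\lambda$ of them sit on each machine (one per integer slot), and I would cover them with at most $\lambda$ further delay rents per machine, which is exactly where the factor $(\lambda+1)$ comes from.

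The hard part will be justifying that those at most $\lambda$ extra rents per machine can always be issued online and still serve the shadow jobs on time. The subtlety is that the offline optimum can serve a tight-window job by reusing a machine it rented early for unrelated work, whereas our online emulation first learns of a shadow job at its release, and a delay rent issued then is unusable for another $\lambda$ steps; so a naive ``rent on demand'' cannot reach far enough into the past. The crux is therefore to show that over-provisioning each machine by up to $\lambda$ additional rents precisely compensates for the inability to rent $\lambda$ steps earlier: using that every shadow job is served by the optimum and that one usable machine covers $T$ slots, I would argue that all startup-shadow jobs can be absorbed within the $\lambda$-per-machine budget while respecting their release times and deadlines, which establishes $A \le (\lambda+1) A_0$ and hence \Cref{thm:timecritical}.
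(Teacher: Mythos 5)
Your decomposition puts the $(\lambda+1)$ loss on the algorithm side ($A\le(\lambda+1)A_0$) while keeping the optimum comparison trivial ($O_0\le O$); the paper does the opposite, and that difference is where your argument breaks. The gap is exactly the step you flag and do not close: the claim that the $\lambda$ extra rents per machine can be issued \emph{in time}. Concretely, take a job $j$ with the tightest admissible window, $d_j-r_j=\lambda+1$. In the delay model the only slot for $j$ that an online algorithm can ever make usable is $[d_j-1,d_j)=[r_j+\lambda,r_j+\lambda+1)$, and a rent covering that slot must be issued at some time $t'\le d_j-1-\lambda=r_j$. Now suppose the delay-free black box, run on the original instance, rents a machine at some $t$ with $r_j<t\le r_j+\lambda$ and assigns $j$ to a shadow slot $\tau\in[t,t+\lambda)$. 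Every event that could trigger your extra rents for this machine --- the rental at $t$, or the assignment of $j$ at real time $\tau\ge t$ --- happens strictly after $r_j$, so any rent you issue in response becomes usable only from $t+\lambda>r_j+\lambda=d_j-1$ and misses the unique feasible slot. The per-machine budget of $\lambda$ extra rents is therefore the right \emph{count} but cannot be \emph{deployed} early enough; "over-provisioning compensates for the inability to rent $\lambda$ steps earlier" is precisely what fails. (Your outer inequality $O_0\le O$ and the middle application of \Cref{thm:main} are fine.)

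The paper avoids this by never letting the black box schedule a job late: it feeds \Cref{thm:main} the modified instance $\mathcal{J}_{\lambda=0}$ in which every deadline is reduced to $d_j-\lambda$, issues each rent at the same time the black box does (so its active window is the black box's window shifted right by $\lambda$), and verifies feasibility by the counting identity $\mathcal{J}(r^*,d^*)=\mathcal{J}_{\lambda=0}(r^*,d^*-\lambda)\le A_I(r^*,d^*-\lambda)=A_{I'}(r^*+\lambda,d^*)$. The number of rents is then exactly $\ALG(\mathcal{J}_{\lambda=0})\le 6\,\OPT(\mathcal{J}_{\lambda=0})$, and the $(\lambda+1)$ factor is paid on the optimum side via the nontrivial lemma $\OPT(\mathcal{J}_{\lambda=0})\le(\lambda+1)\OPT(\mathcal{J})$ cited from Chen and Zhang. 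If you want to salvage your route, you would likewise have to tighten the deadlines seen by the black box (or otherwise force it to schedule every job by $d_j-\lambda-1$), at which point the shadow jobs disappear and you are back to the paper's reduction.
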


\subparagraph*{Other related works.} Offline Machine Minimization is a well-studied and classic model. Garey and Johnson~\cite{DBLP:books/fm/GareyJ79} shows that it is NP-hard. On the algorithm side, Raghavan and Thompson~\cite{DBLP:journals/combinatorica/RaghavanT87} propose an $O(\frac{\log n}{\log \log n})$-approximation algorithm. Later, the ratio has been improved to $O(\sqrt{\frac{\log n}{\log \log n})}$ by Chuzhoy et al.~\cite{chuzhoy_machine_2004}. Whether there exists a constant approximation ratio is still open. Moreover, several special cases are also discussed. Cieliebak et al.~\cite{DBLP:conf/ifipTCS/CieliebakEHWW04} focus on the case that each job's active time $(d_j-r_j)$ is small.  Yu and Zhang~\cite{DBLP:journals/orl/YuZ09} achieve a ratio of 2 in the equal release time case and a ratio of 6 in the equal processing time case. 

Scheduling to minimize the number of calibrations is proposed by Bender et al.~\cite{bender_efficient_2013}. The general case is NP-hard even for checking feasibility. Under unit-size jobs, Bender et al. give a $2$-approximation algorithm; later, Chen et al.~\cite{chen_approximation_2019} give the first PTAS algorithm. However, it is worth noting that whether the unit-size special case is polynomially solvable is still open. Moreover, Angel et al.~\cite{angel_complexity_2017} introduce the concept of \textit{delay}, which means that each calibration requires $\lambda$ time to finish. They study the delay setting on the one-machine special case of the offline calibration problem and show that it is polynomially solvable.

\section{Preliminaries}
We first define the models and introduce the basic notations. 

\subparagraph*{Rent Minimization.} We have a set of jobs $\mathcal{J} = \{1, \cdots, n\}$ and a fixed rent length $T$. For job $j \in \mathcal J$, it has a release time $r_j$, a deadline $d_j$, and a unit processing time $p_j = 1$.
Each job should be assigned to one active machine at an integer time unit $[t,t+1)$, where $t$ is an integer such that $r_j\leq t \leq d_j-1$. We can rent a machine at any integer time point $t$. Then we will have an active machine during $[t,t+T)$. The objective is to minimize the number of machine rents to process all jobs in $\mathcal{J}$.

\subparagraph*{Online Rent Minimization.} In the online version, all jobs are released online, and they become \emph{visible} at their release time. On the other hand, we need to make rent decisions and assign jobs online irrevocably. In particular, at an integer time point $t$, we have:
\begin{itemize}
    \item Jobs with release time equal to $t$ become visible.  
    \item We can decide to rent a machine at $t$ or any time after that. 
    \item We can schedule jobs on active machines during the time unit $[t,t+1)$ irrevocably. 
\end{itemize}

\subparagraph*{Notations on Rent Set.}
We use a multiset $I = \{[s_1,c_1), [s_2, c_2), \cdots [s_i,c_i), \cdots \}$ to denote a set of rents, where the $i$-th rent interval starts at $s_i$ and is active in $[s_i,c_i)$. 
Note that $c_i$ always equals $s_i+T$ when the rent length is $T$; however, we use the general notation for future reference.

Focusing on the time unit $[t,t+1)$, the number of \emph{active units} $A_I (t)$ is defined as the number of active machines at time $t$, which means that we can schedule at most $A_I (t)$ jobs at time $t$. For a given rent set $I$, we have
$
A_I (t) = \left\vert \left\{[s_i, c_i) \in I \mid t \in [s_i, c_i)\right\}\right\vert.
$
We also extend the notation for intervals, such that
$
A_{I}(r^*,d^*) = \sum_{t = r^*}^{d^* - 1} A_I(t)
$
is the number of active units during the time interval $[r^*,d^*)$.

\subparagraph*{Feasibility of Rent Set.} We call a rent set $I$ \emph{feasible} for $J$, if we can schedule all jobs in $J$ on $I$. We introduce a lemma based on Hall's Theorem to check whether $I$ is feasible.

For a given jobs set $J$, we define
$
    J(r^*,d^*) = \left\{j \in J | r^* \leq r_j < d_j \leq d^* \right\}
$
to represent the jobs that must be assigned inside the interval $[r^*,d^*)$. 

\begin{lemma}[Feasibility]
    \label{lem:fea}
$I$ is feasible for $J$ iff. 
    $
    \forall r^* \leq d^* \in \mathbb N,~ A_I(r^*,d^*) \geq |J(r^*,d^*)|~.
    $
\end{lemma}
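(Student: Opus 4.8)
The feasibility of scheduling $J$ on $I$ is precisely the existence of a perfect matching (saturating all jobs) in a bipartite graph where one side is the set of jobs and the other side is the set of available time-unit slots. Specifically, each job $j$ can be connected to every time unit $[t,t+1)$ with $r_j \le t \le d_j-1$ for which an active machine exists; since job $j$ has unit size, assigning it means matching it to one such slot, and a single time unit $[t,t+1)$ can absorb at most $A_I(t)$ jobs (its capacity equals the number of active machines). The rent set $I$ is feasible for $J$ exactly when this (capacitated) bipartite graph admits a matching saturating every job in $J$.

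**First I would set up the bipartite interval-structure carefully and invoke the capacitated (defect) form of Hall's Theorem.** The cleanest route is to model each active unit as an individual slot: replicate the time unit $[t,t+1)$ into $A_I(t)$ identical copies, so we obtain an ordinary (uncapacitated) bipartite graph and a standard matching problem. Hall's condition then states that a saturating matching exists if and only if, for every subset $S$ of jobs, the neighborhood $N(S)$ of available slots satisfies $|N(S)| \ge |S|$. The forward direction (feasibility implies the inequality) is immediate: if $I$ is feasible, then the jobs in $J(r^*,d^*)$ are all scheduled within $[r^*,d^*)$, and since they occupy distinct active units there, we must have $A_I(r^*,d^*) \ge |J(r^*,d^*)|$.

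**The main work is the reverse direction, and the key step is showing that it suffices to check Hall's condition only on the ``interval'' subsets $J(r^*,d^*)$ rather than on all subsets $S$ of jobs.** This is where the interval structure of the problem must be exploited. Given an arbitrary violating set $S$, one wants to replace it by a canonical interval set that violates the condition at least as badly. The standard argument is that the worst-case subsets are determined by a contiguous window of time: for any set $S$ of jobs, its available slots $N(S)$ form a union of time units, and one can always tighten $S$ to the jobs whose entire window $[r_j,d_j)$ lies inside a single interval $[r^*,d^*)$. Concretely, I would take a minimal counterexample $S$ to Hall's condition, argue that $S$ can be assumed to consist of all jobs released at or after some $r^*$ and due at or before some $d^*$ (otherwise split $S$ along a ``gap'' in its slot-interval and apply minimality to one of the pieces), and conclude that $S = J(r^*,d^*)$ up to the inequality direction.

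**I expect the decomposition-into-intervals step to be the main obstacle**, since one must rule out the possibility that a violating subset is ``spread out'' across several disjoint time windows in a way not captured by any single $J(r^*,d^*)$. The resolution is that if the slots reachable from $S$ break into disjoint maximal intervals, then $S$ itself partitions into the corresponding subsets (each job can only reach slots in its own window), and Hall's deficiency is additive over this partition, so some single window already exhibits a deficiency; restricting to that window gives a set contained in some $J(r^*,d^*)$ with $|S'| > A_I(r^*,d^*)$. One subtlety worth checking is the boundary convention $r_j < d_j \le d^*$ versus $r^* \le r_j$, so that $J(r^*,d^*)$ captures exactly those jobs forced to be scheduled strictly inside $[r^*,d^*)$; getting these inequalities to align with the half-open intervals $[s_i,c_i)$ and with the summation range $\sum_{t=r^*}^{d^*-1}$ in the definition of $A_I(r^*,d^*)$ is the routine-but-error-prone bookkeeping that finishes the proof.
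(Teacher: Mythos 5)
Your proposal is correct and follows essentially the same route as the paper: pigeonhole for the forward direction and Hall's theorem (on the bipartite graph of jobs versus replicated active slots) for the converse. In fact you supply the one detail the paper's proof leaves implicit --- that an arbitrary Hall-violating set can be decomposed along the maximal intervals of its slot-neighborhood, where the deficiency is additive and each component is contained in some $J(r^*,d^*)$ --- and that decomposition argument is sound.
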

\begin{proof}
    For any fixed $r^*$ and $d^*$, the sum of active units provided by $I$ is $A_I(r^*,d^*)$. Each job released and due between this period must be scheduled on these time slots. If there exists a pair of $r^*$ and $d^*$ such that $A_I(r^*,d^*) < |J(r^*,d^*)|$, there is no feasible assignment because of the pigeonhole principle.
    On the other hand, if the inequality holds for all $r^*$ and $ d^*$, we can view it as a bipartite matching between jobs and active units. There is a feasible assignment by Hall's Theorem. 
\end{proof}
\subparagraph*{An Efficient Checker and Scheduler: Earliest Deadline First (EDF).}
\textsf{\textbf{E}arliest \textbf{D}eadline \textbf{F}irst} is a greedy algorithm that can find a feasible assignment for $J$ on $I$ if and only if $I$ is feasible for $J$. When we call $\EDF(J, I)$, we scan time units from early to late, and assign the released job with the earliest deadline to a free active machine at the current time unit. If a job with deadline $d$ cannot find a free active machine at the time unit $[d-1,d)$, $\EDF(J, I)$ fails, and we call $d$ the \emph{fail time} of $\EDF(J, I)$. Otherwise, $\EDF(J,I)$ succeeds. Bender et al. \cite{bender_efficient_2013} has already proved that $\EDF$ can check the feasibility. It is also worth noting that $\EDF$ can be efficiently implemented in $O(n \log n)$ by using a heap, instead of going through all integer time points directly. 

\begin{lemma} [\cite{bender_efficient_2013}]
    \label{lem:edf}
    $\EDF(J,I)$ succeeds, \ie, it can find a feasible schedule for $J$ on $I$, if and only if $I$ is feasible for $J$. 
\end{lemma}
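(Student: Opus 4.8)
The plan is to prove the two directions separately, using \Cref{lem:fea} (Hall's condition) as the bridge. The forward direction is immediate: if $\EDF(J,I)$ succeeds, then by construction it assigns every job to a distinct active unit within its window, which is exactly a feasible schedule, so $I$ is feasible for $J$. The substance is the converse, which I would establish by contraposition: assuming $\EDF(J,I)$ fails, I will exhibit an interval $[r^*,d^*)$ with $A_I(r^*,d^*) < |J(r^*,d^*)|$, so that \Cref{lem:fea} certifies that $I$ is infeasible.

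So suppose $\EDF$ fails with fail time $d$; that is, some job $j_0$ with deadline $d_{j_0}=d$ finds no free active machine at the unit $[d-1,d)$. Set $d^* = d$. The heart of the argument is choosing the right left endpoint $r^*$. I would scan backward from $d$ and let $r^*$ be the start of the maximal suffix interval $[r^*,d)$ with the property that in every unit $[t,t+1)$, $t \in [r^*,d-1]$, every active machine is occupied (under EDF's schedule) by a job of deadline at most $d$. By maximality, the preceding unit $[r^*-1,r^*)$ either (i) contains a free active machine, or (ii) is fully occupied but runs at least one job of deadline strictly greater than $d$; the case where the block reaches the earliest relevant time is a degenerate boundary handled identically.

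The key step is then to show that every job EDF runs during $[r^*,d)$, together with the failed job $j_0$, lies in $J(r^*,d)$, i.e.\ has release time at least $r^*$ and deadline at most $d$. The deadline bound is immediate from the definition of the block and from $d_{j_0}=d$. For the release-time bound I would invoke EDF's earliest-deadline priority: suppose some deadline-$\le d$ job running in $[r^*,d)$ (or the job $j_0$) were released before $r^*$; then it would be pending already at $[r^*-1,r^*)$, and in case (i) EDF would not have left a machine idle while it was pending, whereas in case (ii) EDF would have run it in place of the deadline-$> d$ job (since it prefers the earlier deadline) — either way a contradiction. Hence all such jobs are released at $r^*$ or later. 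Counting then gives $A_I(r^*,d)$ distinct jobs filling the active units of the block, all in $J(r^*,d)$, while $j_0 \in J(r^*,d)$ is an additional unscheduled job, so $|J(r^*,d)| \ge A_I(r^*,d)+1 > A_I(r^*,d)$, the desired violation.

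I expect the main obstacle to be making the choice of $r^*$ and the case analysis at the unit $[r^*-1,r^*)$ fully rigorous, especially the clean use of EDF's priority rule to force the release-time lower bound and the careful treatment of the degenerate boundary case. Once the critical busy block $[r^*,d)$ is correctly isolated, the counting that feeds into \Cref{lem:fea} is routine.
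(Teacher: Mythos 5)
The paper does not actually prove this lemma: it is imported verbatim from Bender et al.\ (the text says only that ``Bender et al.\ has already proved that $\EDF$ can check the feasibility''), so there is no in-paper argument to compare yours against. That said, your proposed proof is the standard and correct one for unit-size jobs, and it is the natural companion to \Cref{lem:fea}: the forward direction is trivial, and the converse via the maximal ``busy suffix'' $[r^*,d)$ in which every active unit runs a deadline-$\le d$ job is exactly the classical exchange argument. Two details are worth pinning down when you write it out. First, you should check that the unit $[d-1,d)$ itself qualifies for the block, i.e.\ that no machine there runs a job of deadline $>d$; this follows from EDF's priority rule since the failed job $j_0$ (deadline $d$) is pending at that unit. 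Second, units with $A_I(t)=0$ inside the block satisfy the defining condition vacuously and contribute nothing to $A_I(r^*,d)$, which is harmless for the count but means the backward scan terminates only at a boundary (earliest release time), the degenerate case you already flag; taking $r^*=\min_j r_j$ there makes the release-time bound vacuous and closes the argument. With those points made explicit, the counting $|J(r^*,d)| \ge A_I(r^*,d)+1$ and the appeal to \Cref{lem:fea} go through, so your proof is a legitimate self-contained substitute for the citation.
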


\subparagraph*{Using EDF online.} Note that we only make comparisons between released jobs. Therefore, the EDF algorithm can be simulated online : we only need to find a feasible rent set $I$, and then EDF can automatically find a feasible assignment online.


\section{Oracle-based Online Algorithm Framework}

Moving towards online algorithms, one natural way is to use an offline algorithm as an \emph{oracle} to suggest the actions of online algorithms. We keep track of this offline algorithm and make corresponding online decisions when the offline algorithm changes along with the online jobs release. Whenever the offline algorithm increases by one at the moment $t$ because of the change of the job set, the online algorithm performs one \Batch at this time $t$, which is a fixed rent scheme that contains $\Gamma$ machines. Intuitively, we use these $\Gamma$ machines to catch up with the one increment of the offline oracle. It is worth noting that the $e$-competitive algorithm for Online Machine Minimization follows this approach~\cite{devanur_online_2014}. 

In our case, compared to Machine Minimization, we have two main differences in our oracle-based framework. The first difference is the job set we input to the oracle. Because of the online fashion, the most natural way is to input the set of all released jobs to the oracle. In Machine Minimization, it works because $T=\infty$ and earlier rents are always more powerful; however, this is not true in Rent Minimization. Indeed, too early rents may cause trouble in Rent Minimization.
Intuitively, we are only allowed to make online rent when the offline oracle reports an increment if we want to bound the competitive ratio in the framework. Consider the case where $T=10$ and two jobs will be due at $100$ with release time $0$ and $90$. If we report the first job to the offline oracle at $0$, the offline oracle will return one new rent interval. Following the oracle-based framework, we will make $\Gamma$ online rent intervals at $0$. However, we still need to make more rent intervals at $90$, while the offline oracle may not increase. To this end, we use the job set $J_t$ as the job set we input to the oracle at time $t$. A job $j$ is in $J_t$ if it satisfies the following two properties:
\begin{enumerate}
    \item[1)] It is \emph{visible} at $t$, i.e., $r_j\leq t$. 
    \item[2)] It is \emph{emergent} at (or before) $t$, i.e., $d_j \leq t + T$.
\end{enumerate}

Another difference is an augmentation factor $\alpha$. The oracle is allowed to have $\alpha T$ active time for each rent. Since the existence of a polynomial time optimal offline algorithm for Rent Minimization is still unknown, this factor allows us to find an efficient substitute. We use $\Oracle_\alpha (J, T)$ (instead of $\Oracle (J, \alpha T)$, since the target rent length is still $T$) to denote an oracle with augmentation factor $\alpha$. The framework is formalized in \Cref{alg:oracleframework}.

\begin{algorithm}[htbp]
  \caption{Oracle-based Online Algorithm Framework}
  \label{alg:oracleframework}
  \begin{algorithmic}
  \Procedure{{\sf OracleBasedOnline}}{$t$: time, $J$: known jobs, $T$: length of rent}
  \State $\Delta_t = |\Oracle_\alpha(J_t, T)| - |\Oracle_\alpha(J_{t - 1}, T)|$ \Comment{$\alpha$ is a positive integer}
  \State Perform $\Delta_t$(if $\Delta_t>0$) \Batch operations at $t$, consisting of $\Gamma$ machines that start at or after $t$.
  \State {\bf schedule} jobs at $t$ following $\EDF(J,I)$, where $I$ is the current online rent set. 
  \EndProcedure
  \end{algorithmic}
\end{algorithm}




Then, we discuss how this framework helps us control the number of rents made by the online algorithm. First, as a substitute for the optimal offline algorithm, $\Oracle_{\alpha}$ needs to maintain some properties similar to those of the optimal offline algorithm. Second, \Batch should support the online algorithm to be as powerful as the offline oracle in scheduling all released jobs.
We integrate and formalize these messages in the following lemma. 
\begin{lemma}
    \label{lem:oracle}
    Let $\OPT(J, T)$ be the number of rents used by the optimal offline algorithm to schedule the job set $J$,  
    \Cref{alg:oracleframework} is $\Gamma$-competitive if these three properties are guaranteed: 
    \begin{enumerate}
        \item[1)] For any job set $J$, $|\Oracle_{\alpha}(J, T)| \leq \OPT(J,T)$;
        \item[2)] The offline oracle is online monotone: $|\Oracle_{\alpha}(J_{t_1}, T)| \leq |\Oracle_{\alpha}(J_{t_2}, T)|$ if $t_1 \leq  t_2$;
        \item[3)] \Cref{alg:oracleframework} is feasible for scheduling all online released jobs. 
    \end{enumerate}
\end{lemma}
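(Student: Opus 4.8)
The plan is to reduce the competitiveness claim to a single telescoping count, with each of the three hypotheses playing a distinct, essentially modular role. First I would write down the exact online cost. At every integer time $t$, \Cref{alg:oracleframework} performs $\Delta_t$ \Batch operations (only when $\Delta_t>0$), and each \Batch consists of exactly $\Gamma$ machines; hence the total number of rents made online equals $\Gamma \sum_{t\,:\,\Delta_t>0} \Delta_t$, where $\Delta_t = |\Oracle_\alpha(J_t,T)| - |\Oracle_\alpha(J_{t-1},T)|$.

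The key observation is that Property 2 (online monotonicity) forces $\Delta_t \ge 0$ for every $t$, since $J_{t-1}$ is the oracle input at the earlier time $t-1 \le t$. Consequently the restriction ``$\Delta_t>0$'' merely drops zero terms without changing the sum, so $\sum_{t:\Delta_t>0}\Delta_t = \sum_t \Delta_t$, and this full sum telescopes. I would then evaluate it at its two ends: before any job is released the oracle input is empty, so $|\Oracle_\alpha(\emptyset,T)|=0$; and once $t$ exceeds $\max_j \max(r_j,\,d_j-T)$ every job is both visible and emergent, so $J_t = \calJ$ and $|\Oracle_\alpha(J_t,T)| = |\Oracle_\alpha(\calJ,T)|$. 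The telescoping sum therefore equals $|\Oracle_\alpha(\calJ,T)|$, giving an online rent count of exactly $\Gamma\,|\Oracle_\alpha(\calJ,T)|$.

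It remains to relate this to the optimum. Property 1 yields $|\Oracle_\alpha(\calJ,T)| \le \OPT(\calJ,T)$, so the online cost is at most $\Gamma\cdot\OPT(\calJ,T)$. Finally, Property 3 guarantees that the schedule the algorithm commits to (via $\EDF$ on its current rent set) is feasible for all released jobs, so this rent count is realized by a genuinely valid solution; together these establish the $\Gamma$-competitive ratio.

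I expect the only real subtlety --- and the reason all three hypotheses are required rather than Property 1 alone --- to be the monotonicity step. Without Property 2, summing only the positive increments $\sum_{t:\Delta_t>0}\Delta_t$ could strictly exceed the net change $|\Oracle_\alpha(\calJ,T)|$ (the oracle could grow and later shrink, so the online algorithm, being irrevocable, would pay for transient peaks it can never reclaim), and the clean telescoping bound would collapse. Thus the main conceptual point is recognizing that monotonicity is exactly what licenses the telescoping cancellation, while Property 1 converts the collapsed value into an $\OPT$ bound and Property 3 certifies that the counted solution actually schedules every job.
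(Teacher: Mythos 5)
Your proof is correct and follows essentially the same route as the paper's: monotonicity makes every $\Delta_t$ nonnegative so the positive-increment sum telescopes to $\Gamma\,|\Oracle_\alpha(\calJ,T)|$, property 1) bounds this by $\Gamma\cdot\OPT(\calJ,T)$, and property 3) certifies feasibility. You merely spell out the telescoping and boundary values more explicitly than the paper does.
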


\begin{proof}
The online algorithm makes $\sum _{\Delta_t>0} \Delta_t$ rent batches, which is exactly $|\Oracle_{\alpha} (\mathcal{J}, T)|$ by property 2) and is not greater than $|\OPT(\mathcal{J}, T)|$ by property 1). Also, the output satisfies the feasibility requirement by property 3). Therefore, \Cref{alg:oracleframework} is $\Gamma$-competitive.
\end{proof}

\subparagraph*{The $e$-competitive algorithm for Online Machine Minimization.} We can use the framework to understand the $e$-competitive Online Machine Minimization algorithm.
\begin{itemize}
    \item \Oracle is the optimal offline algorithm, and we set $\alpha=1$. The monotonicity directly comes from optimality. 
    \item $J_t$ is the set of visible jobs at $t$ because all jobs are emergent. 
    \item Each $\Batch$ contains $e$ new machines in average; for simplicity, we omit any rounding issues related to $e$.
\end{itemize}

\subparagraph*{Choice of the oracle.} 
Recall that we do not have an efficient optimal algorithm for Rent Minimization currently. It remains to find a suitable substitute that also uses a small number of rents (property 1). One candidate algorithm may be the Lazy-Binning algorithm by Bender et al.~\cite{bender_efficient_2013}, which requires an augmentation factor of $2$ to satisfy property 1). However, Lazy-Binning algorithm, as well as other relatively simple 2 approximation algorithms we come up with, cannot guarantee monotonicity. This will make us fail to bound the competitive ratio of the online algorithm. 
In the next section, we introduce our oracle with an augmentation factor of $3$, called the semi-online algorithm, which provides all the properties we need in \Cref{lem:oracle}.

\section{The Semi-Online Algorithm}
\label{sec:semi}
In this section, we introduce the semi-online algorithm shown as \Cref{alg:semi}, which uses an augmentation factor of $3$ and acts as the $\Oracle_3$ in our framework.

\begin{algorithm}[ht]
  \caption{The Semi-Online Algorithm}
  \label{alg:semi}
  \begin{algorithmic}
  \Procedure{{\sf SemiOnline}}{$J$: input job set, $T$: length of rent}
  \State $J', I \gets \varnothing$ \Comment{$I$ is a multiset for rents.}
  \State $\tau_j = \max \{r_j, d_j - T\}$ for all $j$. 
  \For {$j \in J$ in non-decreasing order of $\tau_j$}
  \State $J' \gets J' \cup \{j\}$
  \If {$\textsf{EDF}(J', I)$ fails}
  \State $I \gets I \cup \left\{\left[\tau_j - T, \tau_j +2T\right)\right\}$ \Comment{A rent that starts at $\tau_j-T$ with length $3T$.}
  \EndIf
  \EndFor
  \State \Return $I$
  \EndProcedure
  \end{algorithmic}
\end{algorithm}

We call Algorithm \ref{alg:semi} semi-online, because the enumerating order is exactly the same as how $J_t$ increases in the oracle-based framework. Thus, if we have some new jobs with $r_j=t$ or $d_j -T = t$ when the online time moves from $t-1$ to $t$, the only possible difference between $\Semi(J_{t-1},T)$ and $\Semi(J_{t},T)$ is some rent intervals of $[t-T,t+2T)$. This observation could be formalized into the following properties of the semi-online algorithm. 


\begin{lemma} [Strong Monotonicity]
    \label{lem:semionlineproperty}
    We have the following two properties for \Cref{alg:semi}.
    
    \begin{enumerate}
        \item For any $J_{t_1}$ and $J_{t_2}$ where $t_1\leq t_2$, we have 
        $
            \Semi(J_{t_1},T) \subseteq \Semi(J_{t_2},T).
        $
        \item $\Semi(J_t,T) \setminus \Semi(J_{t-1},T)$ is a multiset of a fixed rent interval $[t-T,t+2T)$.
    \end{enumerate}
\end{lemma}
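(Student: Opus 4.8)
The plan is to derive both parts from a single structural observation about how $J_t$ grows and how \Cref{alg:semi} consumes jobs. First I would record the characterization $J_t = \left\{\, j : \tau_j \le t \,\right\}$, which is immediate from the two defining conditions of $J_t$: visibility $r_j \le t$ together with emergence $d_j \le t+T$ say exactly that $\max\{r_j, d_j - T\} = \tau_j \le t$. Consequently, as the online time advances, jobs enter $J_t$ precisely in non-decreasing order of $\tau_j$ --- which is the very order in which \Cref{alg:semi} processes them. To make the argument watertight I would fix a deterministic tie-breaking rule (say, by job index) extending the non-decreasing-$\tau_j$ order to a total order, so that ``the order of processing'' is unambiguous and identical across different invocations of $\Semi$.

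The heart of the proof is a \emph{prefix-invariance} claim: if $J_{t_1} \subseteq J_{t_2}$ (equivalently $t_1 \le t_2$), then running $\Semi(J_{t_2}, T)$ reproduces the entire execution of $\Semi(J_{t_1}, T)$ during its first phase, namely while it processes the jobs with $\tau_j \le t_1$. I would prove this by induction along the processing order. The loop body's only branching decision --- whether $\EDF(J', I)$ fails --- is a pure function of the current accumulated pair $(J', I)$, and both $J'$ and $I$ are built solely from jobs processed so far. Since the jobs with $\tau_j \le t_1$ form a common prefix of both total orders, a step-by-step induction shows that the pair $(J', I)$ after this prefix is identical in the two runs. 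In particular, the value of $I$ at that point equals the final output $\Semi(J_{t_1}, T)$.

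Part 1 then follows because $I$ is only ever enlarged --- the algorithm appends rent intervals and never deletes them --- so the final output $\Semi(J_{t_2}, T)$ contains the post-prefix value $\Semi(J_{t_1}, T)$ as a sub-multiset. For Part 2 I would specialize to $t_2 = t$ and $t_1 = t-1$. Because release times, deadlines, and $T$ are integers, every job newly appearing in $J_t \setminus J_{t-1}$ has $\tau_j = t$ exactly. By prefix invariance, $\Semi(J_t, T) \setminus \Semi(J_{t-1}, T)$ consists precisely of the rents created while these new jobs are processed, and each such rent is the interval $[\tau_j - T, \tau_j + 2T) = [t - T, t + 2T)$. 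Hence the difference is a multiset of copies of the single fixed interval $[t-T, t+2T)$ (possibly zero copies), as claimed.

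The step I expect to be the main obstacle is making the prefix-invariance induction genuinely rigorous rather than ``clearly true'': one must verify that $\EDF(J', I)$ depends on nothing beyond its two arguments and that no part of the loop references jobs not yet processed, so that the two executions agree step for step on the common prefix. Fixing the tie-breaking rule removes the only real ambiguity here; with that in place the induction is routine, but it is the load-bearing part of the whole argument, and everything else follows from the monotone ``append-only'' behavior of $I$ and the integrality of $\tau_j$.
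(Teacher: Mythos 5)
Your proposal is correct and follows essentially the same route as the paper: both rest on the observation that $J_t=\{j:\tau_j\le t\}$, so the algorithm's enumeration order makes the run on $J_{t_1}$ a prefix of the run on $J_{t_2}$, with the new rents at step $t$ forced to be copies of $[t-T,t+2T)$. Your explicit prefix-invariance induction and tie-breaking rule merely spell out details the paper leaves implicit.
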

\begin{proof}
    Intuitively, the reason behind the lemma is that the order of $\tau_j$ is the same as the order in which we insert jobs into $J_t$ as $t$ increases. Formally speaking, compare $J_{t_1}$ and $J_{t_2}$ and consider a job $j \in J_{t_2} \setminus J_{t_1}$. By definition, we have that $\tau_j \geq \displaystyle\max_{j' \in J_{t_1}} \tau_{j'}$. Therefore, \Cref{alg:semi} first enumerates the jobs in $J_{t_1}$ and then the jobs in $J_{t_2} \setminus J_{t_1}$, which concludes the first property immediately. For the second property, the reason is that $J_{t} \setminus J_{t-1}$ is a set of jobs with $r_j=t$ or $d_j-T=t$. In other words, we enumerate them after jobs in $J_{t-1}$. Thus, if $I$ continues to grow when we enumerate them, the new interval must be $[t-T,t+2T)$.
\end{proof}


The strong monotonicity in \Cref{lem:semionlineproperty} suffices to show the weak monotonicity in property 2) of \Cref{lem:oracle}. On the other hand, these two properties are also used in the proof of property 3) later. It remains to prove property 1) by bounding the cardinality of the semi-online algorithm's solution.

\begin{lemma}
    \label{lem:semicost}
    $|\Semi(J,T)| \leq \OPT(J,T)$, and $\Semi(J,T)$ is feasible for $J$.
\end{lemma}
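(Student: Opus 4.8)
The statement bundles two claims: feasibility of $\Semi(J,T)$ and the cost bound $|\Semi(J,T)| \le \OPT(J,T)$. I would prove feasibility first, since it is the more mechanical of the two and reuses the Hall-type criterion of \Cref{lem:fea}. The plan is to maintain the invariant that after each job is processed the current rent set $I$ is feasible for the current prefix $J'$, and to argue inductively that a single added rent always restores feasibility once $\EDF(J',I)$ fails.

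For the inductive step, suppose $I$ is feasible for $J'$ and we add the next job $j$ (in $\tau$-order), causing $\EDF(J' \cup \{j\}, I)$ to fail. By \Cref{lem:fea} there is a violated window $[r^*,d^*)$ with $A_I(r^*,d^*) < |(J'\cup\{j\})(r^*,d^*)|$. The first observation is that every such violated window must contain the job $j$, i.e.\ $r^* \le r_j$ and $d_j \le d^*$, because $I$ was already feasible for $J'$ and only $j$ was added; consequently the deficit is exactly one, $A_I(r^*,d^*) = |(J'\cup\{j\})(r^*,d^*)| - 1$. The second observation is that the anchor $\tau_j = \max\{r_j,d_j-T\}$ lies in $[r^*,d^*)$ (indeed $r^* \le r_j \le \tau_j$, and $\tau_j < d_j \le d^*$ when $\tau_j = d_j - T$, while $\tau_j = r_j < d^*$ otherwise), whereas the added rent $[\tau_j - T,\tau_j + 2T)$ covers a unit neighbourhood of $\tau_j$. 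Hence the new rent raises $A(r^*,d^*)$ by at least one on every previously violated window, wiping out the unit deficit everywhere simultaneously; since adding capacity never creates new violations, \Cref{lem:fea} gives feasibility of the updated $I$ for $J'\cup\{j\}$. Applying this at every step yields feasibility of $\Semi(J,T)$ for $J$.

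For the cost bound I would argue that every rent Semi adds is \emph{forced} on any schedule, including the optimum. Each time Semi inserts its $i$-th rent, $\EDF$ has just failed on the prefix with the previous rent set $I_{i-1}$; extracting the canonical $\EDF$ failure window $[a_i,b_i)$ (ending at the fail time) gives a tight certificate $|J_i'(a_i,b_i)| = A_{I_{i-1}}(a_i,b_i) + 1$ in which all jobs have release $\ge a_i$ and deadline $\le b_i$. Since $\OPT$ must also schedule these jobs inside $[a_i,b_i)$, we get $A_{\OPT}(a_i,b_i) \ge A_{I_{i-1}}(a_i,b_i) + 1$, so relative to Semi's first $i-1$ rents the optimum is already over-committed on this window. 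The ordering of jobs by $\tau_j$ is what makes these certificates usable: it pins the window to $b_i \le \tau_{j_i} + T$ and places the anchor $\tau_{j_i}$ inside $[a_i,b_i)$, so the $k$ triggering events come in a controlled left-to-right order.

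The main obstacle is to turn these $k$ local, one-unit deficits into the single global inequality $\OPT(J,T) \ge k$. I would attempt a charging argument that assigns to each Semi rent a distinct rent of the optimal length-$T$ solution, charging rent $i$ to an $\OPT$ rent that serves the marginal job forcing the certificate $[a_i,b_i)$. The delicate point — and where the augmentation factor of $3$ is indispensable — is showing that no optimal rent is charged twice: because Semi's rents have length $3T$ while $\OPT$'s have length only $T$, the capacity Semi has already installed around a previous trigger is too generous for a single short optimal rent to explain two separate failures, which should force the charged $\OPT$ rents to be distinct. I expect establishing this distinctness (equivalently, a laminar or packing structure on the certificate windows compatible with length-$T$ rents) to be the crux of the whole lemma; the alternative, an induction on $k$ that peels off the last triggering cluster and shows it both drops Semi's count by one and lowers $\OPT$ by at least one, runs into the non-monotone response of the greedy to job deletion and seems harder to control.
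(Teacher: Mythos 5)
Your feasibility argument is sound and is essentially the paper's own (its \Cref{lem:semifeasiblity}): every newly violated window $[r^*,d^*)$ must satisfy $r^*\le r_j$ and $d^*\ge d_j$, the deficit is one, and since $r_j\le\tau_j\le d_j-1$ the unit $[\tau_j,\tau_j+1)$ lies inside every such window and is covered by the new rent $[\tau_j-T,\tau_j+2T)$. No issue there.

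The cost bound, however, is left genuinely open. You correctly extract the right ingredients --- the $\EDF$ failure certificate, the fact that the $\tau$-ordering pins the fail time $d$ into $[\tau_j,\tau_j+T)$, and the intuition that the length-$3T$ rent is what prevents double-charging --- but you explicitly defer the crux (``no optimal rent is charged twice'') and offer no mechanism for it. A direct injection from Semi's rents into $\OPT$'s rents via overlapping failure windows is exactly the step that is hard to control, and the paper does not attempt it. Instead it uses a potential function: define $\CoOPT(I)$ as a minimum-cardinality set of length-$T$ rents whose union with $I$ is feasible for $J$, so that $|\CoOPT(\varnothing)|=\OPT(J,T)$. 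The key exchange lemma (\Cref{lem:coopt}) shows that if $\EDF(J,I)$ fails at time $d$, then some optimal complement contains a rent $[s,s+T)$ with $d-T\le s<d$; one proves this by replacing the earliest rent of an arbitrary optimal complement with $[d-T,d)$ and checking Hall's condition in the two cases $d^*<d$ and $d^*\ge d$. Combined with $\tau_j\le d<\tau_j+T$, this rent satisfies $[s,s+T)\subseteq[\tau_j-T,\tau_j+2T)$, so adding the $3T$ rent strictly decreases $|\CoOPT(\cdot)|$ (\Cref{lem:semimain}). Since the potential starts at $\OPT$ and is nonnegative, Semi makes at most $\OPT$ rents. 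This monovariant sidesteps the distinctness/packing structure you flagged as the obstacle; to complete your proof you would need to either supply that packing argument (which is not obviously true as stated) or switch to something like the complement-potential route.
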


Before proving Lemma \ref{lem:semicost}, we introduce $\CoOPT$ so that we can better understand the solution structure.

\begin{definition}[Optimal complement solution]
    For a job set $J$, rent length $T$, and a given rent set $I$ (which may not be length $T$), the optimal complement solution of $I$, denoted as $\CoOPT (I)$, is defined as a rent set of length $T$ with minimum cardinality such that $I \cup \CoOPT (I) $ is feasible for $J$.
\end{definition}
\begin{fact}
\label{fac:coopt}
$\CoOPT (\varnothing) = \OPT, \CoOPT (\OPT) = \varnothing.$
\end{fact}

Consider a rent set $I$ that is infeasible for $J$. Below, we state the main property of \CoOPT.

\begin{lemma}
\label{lem:coopt}
 Let $d$ be the fail time of $\EDF(J,I)$. There exists a $\CoOPT(I)$ such that there is a rent interval $[s,s+T) \in \CoOPT(I)$ that satisfies:
$
d - T \leq s < d.
$
\end{lemma}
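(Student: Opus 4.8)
The plan is to first translate the conclusion into a statement about coverage of the failure slot. Since every rent in $\CoOPT(I)$ has length $T$, the condition $d-T\le s<d$ holds exactly when $[s,s+T)$ is active during the unit slot $[d-1,d)$, so the goal is to exhibit an optimal complement solution that places a rent covering this failure slot. The starting point is to extract the demand that makes EDF fail. Because $\EDF(J,I)$ fails at $d$, the set $I$ is infeasible, and the certificate supplied by \Cref{lem:fea} can be localized to a window ending at $d$: there is a time $r^*$ with $A_I(r^*,d)<|J(r^*,d)|$. Consequently, for any $C$ with $I\cup C$ feasible we get $A_C(r^*,d)\ge |J(r^*,d)|-A_I(r^*,d)\ge 1$, so every complement already places a rent active somewhere in $[r^*,d)$; the issue is only whether such a rent can be taken to reach the slot $[d-1,d)$.

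Next I would reduce the existence statement to a statement about the \emph{usefulness} of the failure-slot rent $\eta:=[d-T,d)$. On one hand any complement of $I$ is a complement of $I\cup\{\eta\}$, and if $D$ is a complement of $I\cup\{\eta\}$ then $D\cup\{\eta\}$ is a complement of $I$; hence $|\CoOPT(I)|-1\le |\CoOPT(I\cup\{\eta\})|\le|\CoOPT(I)|$. If I can show the left inequality is tight, i.e.\ $|\CoOPT(I\cup\{\eta\})|=|\CoOPT(I)|-1$, then $\{\eta\}\cup\CoOPT(I\cup\{\eta\})$ is a complement of $I$ of size exactly $|\CoOPT(I)|$ that contains $\eta$; being of minimum size it is itself an optimal complement, and $\eta$ covers the failure slot, which is exactly what we need. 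Thus it suffices to prove that the failure-slot rent strictly reduces the complement, equivalently that, starting from an optimal $C=\CoOPT(I)$ and adding $\eta$, we may delete one rent of $C$ while preserving feasibility.

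To find a deletable rent I would work with the optimal complement pushed as far right as possible (maximizing the total of its start times) and argue by contradiction, assuming no rent of $C$ covers the failure slot. Intuitively, adding $\eta$ raises the slack in the feasibility certificate by one throughout $[d-T,d)$, so a rent is deletable unless it is pinned by a tight interval ending before $d-T$ that gains nothing from $\eta$; equivalently, since $C$ is rightmost, shifting all rents active in $[r^*,d)$ one step later must violate \Cref{lem:fea}, producing a tight interval $[a,b)$ with $A_{I\cup C}(a,b)=|J(a,b)|$ and $b<d$ that contains more left endpoints than right endpoints of these rents. The hard part will be the final uncrossing: combining this tight interval with the over-demanded window $[r^*,d)$ via the supermodularity of $|J(\cdot,\cdot)|$ and the modularity of $A_{I\cup C}(\cdot,\cdot)$ on overlapping intervals produces a candidate violated window $[\min(a,r^*),d)$, but a single uncrossing falls short by precisely the capacity $A_C(r^*,d)$ that $C$ legitimately supplies there. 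I expect the crux to be cancelling this slack: one must exploit the tightness inherited from blocking \emph{every} rent active in $[r^*,d)$, together with the fact that none of them reaches the failure slot (so their contributions telescope), to make the merged window genuinely over-demanded for $I\cup C$ and contradict its feasibility. Getting this interval bookkeeping right—in particular handling whether $r^*$ lies before or after $d-T$ and accounting for rent multiplicities—is the delicate step.
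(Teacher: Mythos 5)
There is a genuine gap: the statement you reduce everything to is false. You propose to prove that the specific rent $\eta=[d-T,d)$ strictly decreases the complement, i.e.\ $|\CoOPT(I\cup\{\eta\})|=|\CoOPT(I)|-1$, and then take $\{\eta\}\cup\CoOPT(I\cup\{\eta\})$ as the desired optimal complement. But the lemma only asserts that \emph{some} rent starting in $[d-T,d)$ can be assumed to lie in an optimal complement, not the leftmost such rent $[d-T,d)$ itself, and your stronger claim fails. Take $T=10$, $I=\varnothing$, and two unit jobs with windows $[100,101)$ and $[105,106)$. Then $\EDF(J,\varnothing)$ fails at $d=101$; a single rent $[96,106)$ covers both slots $[100,101)$ and $[105,106)$, so $|\CoOPT(I)|=\OPT=1$; yet $\eta=[91,101)$ misses the slot $[105,106)$ entirely, so $|\CoOPT(I\cup\{\eta\})|=1\neq 0$, and $\{\eta\}\cup\CoOPT(I\cup\{\eta\})$ has size $2>\OPT$, hence is not an optimal complement. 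The obstruction is exactly the scenario your final paragraph wrestles with: the rent of $\CoOPT(I)$ covering the failure slot may also be pinned by demand \emph{after} $d$, and pushing it all the way left to $[d-T,d)$ sacrifices that coverage. No uncrossing of tight windows can rescue the claim, because it is simply not true; and even apart from this, you explicitly leave the deletion step (``the crux'') unproved. Your opening observation---that infeasibility forces every complement to be active in a window ending at $d$---is fine and corresponds to the easy half $s<d$.

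The paper avoids the trap by an exchange argument that keeps the flexibility in $s$: take the \emph{earliest} rent $[s,s+T)$ of an optimal complement; $s<d$ because the job failing at $d$ must be covered by something in the complement; and only if $s<d-T$ is that rent replaced by $[d-T,d)$. This replacement is a pure rightward shift, which never loses active units on any window $[r^*,d^*)$ with $d^*\ge d$ (windows with $d^*<d$ are already handled by $I$ alone, since $d$ is the first failure time of $\EDF(J,I)$). If the earliest rent already starts in $[d-T,d)$, nothing is moved. This weaker conclusion suffices downstream in \Cref{lem:semimain}, because the $3T$-interval $[\tau_j-T,\tau_j+2T)$ contains $[s,s+T)$ for every $s\in[d-T,d)$; if you want to keep your ``add a rent and delete one'' framework, the correct target is $|\CoOPT(I\cup\{[\tau_j-T,\tau_j+2T)\})|\le|\CoOPT(I)|-1$ for the $3T$ augmentation rent, not for $\eta$.
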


\begin{proof}
We use $\CoOPT$ as a shorthand for $\CoOPT(I)$ in this proof. Let $[s,s+T)$ be the earliest interval in \CoOPT. 

First, we show that $s < d$. Suppose, for a contradiction, that $s \geq d$. Let $j$ be the job that fails in $\EDF(J,I)$, where $J$ is a fixed given job set. Then, $j$ has no more active units in $\CoOPT$, since all rent intervals start at or after $d$. But this contradicts the definition of $\CoOPT$, which is a feasible rent set for $J$.

Second, we show that $s \geq d - T$ can be true. If this is not true, we construct a new rent set 
$
{\CoOPT'} = \CoOPT \setminus \left\{[s,s+T)\right\} \cup \left\{[d-T,d)\right\}.
$
That is, we replace the rent interval at $s$ with another one at $d-T$. We claim that $\CoOPT' \cup I$ is also feasible for $J$. This means that $\CoOPT'$ is also a feasible $\CoOPT$, and $[s' = d-T,T)$ is a feasible rent interval that satisfies our desired condition.

To prove our claim, we fix an arbitrary choice of $r^*\leq d^* \in \mathbb{N}$, and we verify the condition in \Cref{lem:fea}, i.e., 
$
A_{I \cup \CoOPT'}(r^*,d^*) \geq |J(r^*,d^*)|. 
$

We consider two cases: 
\begin{itemize}
    \item Case 1: $d^* < d$. If the condition does not hold, we have $A_{I}(r^*,d^*) \leq A_{I \cup \CoOPT'}(r^*,d^*) < J'(r^*,d^*)$, which means that $\EDF(I,J')$ must fail no later than $d^*$ since the active units are not enough beforehand. This contradicts the definition of $d$.
    
    \item Case 2: $d^* \geq d$. The only difference between $\CoOPT$ and $\CoOPT'$ is the contribution of active units by $[s,s+T)$ and $[d-T,d)$. We prove that $[d-T,d)$ must provide at least as many active units as $[s,s+T)$ does. 
    \begin{figure}[ht]
    \centering
    \includegraphics[width=110mm]{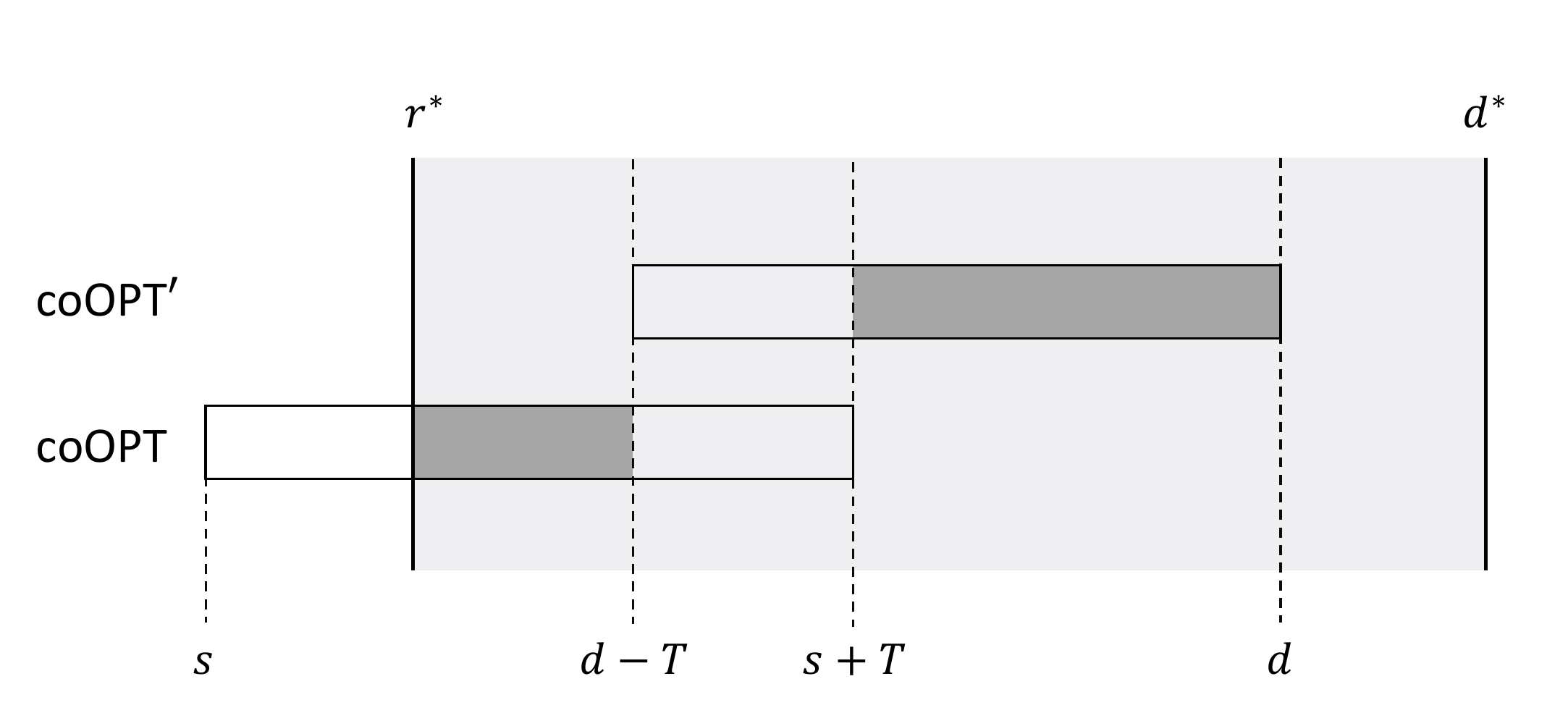}
    \caption{\CoOPT' contributes at least as many active units as \CoOPT on $[r^*,d^*)$ when $d^*\geq d$.}
    \label{fig:semiproof}
\end{figure}
    Referring to \Cref{fig:semiproof}, we see that $[d-T,d)$ has more active units than $[s,s+T)$ in $[s+T,d)$, and vice versa in $[s,d-T)$. Since $d^*\geq d$, we need $r^* \leq d-T$ to reach the advantage area of $\CoOPT$; however, $[r^*,d^*)$ then covers the whole part of $[d-T,d)$. This implies that the total contribution of $\CoOPT$ never exceeds that of $\CoOPT'$. 
\end{itemize}

The discussion concludes the claim.

\end{proof}

\begin{lemma}
    \label{lem:semifeasiblity}
    At the end of each iteration of $j$, $I$ is feasible for $J'$.  
\end{lemma}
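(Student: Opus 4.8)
The plan is to argue by induction on the iterations of the main loop, using the Feasibility Lemma (\Cref{lem:fea}) as the working criterion throughout. The base case is immediate: before any job is processed both $J'$ and $I$ are empty, and the empty rent set is vacuously feasible for the empty job set. For the inductive step I would assume that at the end of the previous iteration $I$ is feasible for the then-current set $J'$, and track what happens when the next job $j$ is inserted, so that $J'$ becomes $J' \cup \{j\}$.

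There are two branches. If $\EDF(J' \cup \{j\}, I)$ succeeds, then \Cref{lem:edf} already gives that $I$ is feasible for $J' \cup \{j\}$ and we are done. The substantive branch is when EDF fails and the algorithm appends the single length-$3T$ interval $[\tau_j - T, \tau_j + 2T)$; writing $I' = I \cup \{[\tau_j-T,\tau_j+2T)\}$, I would verify the condition of \Cref{lem:fea} for $I'$ and $J' \cup \{j\}$ directly. To do so I would fix an arbitrary pair $r^* \le d^*$ and split on whether $j$ is counted in the window, that is, whether $r^* \le r_j < d_j \le d^*$.

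When $j$ is not counted, the quantity $|(J'\cup\{j\})(r^*,d^*)|$ equals $|J'(r^*,d^*)|$, while the active-unit count only grows when passing from $I$ to $I'$, so the inductive hypothesis $A_I(r^*,d^*) \ge |J'(r^*,d^*)|$ already suffices. The crux is the case where $j$ is counted, which raises the required count by exactly one; here I must exhibit at least one new active unit of $[\tau_j-T,\tau_j+2T)$ falling inside $[r^*,d^*)$. The key observation, and the heart of the proof, is that $\tau_j=\max\{r_j,d_j-T\}$ always lies in the range $r_j \le \tau_j \le d_j - 1$ (using $T\ge 1$ and integrality), so whenever $r^* \le r_j$ and $d_j \le d^*$ the integer slot at $\tau_j$ sits simultaneously inside $[r^*,d^*)$ and inside $[\tau_j-T,\tau_j+2T)$. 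This contributes the missing active unit, and together with the inductive hypothesis it restores $A_{I'}(r^*,d^*) \ge |(J'\cup\{j\})(r^*,d^*)|$.

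The main obstacle is precisely this placement step: one must confirm that the new interval, although it may be added well before or after $j$'s own window, is long enough to overlap every window $[r^*,d^*)$ that newly contains $j$, and that the overlap lands on a genuine integer slot so that it is actually counted by $A_{I'}$. Once the containment $\tau_j \in [r^*,d^*)$ is pinned down, the remainder is routine accounting with \Cref{lem:fea}. I would also remark that this verification never uses the fact that EDF failed: EDF failure only triggers the new rent, while feasibility of the augmented $I'$ holds unconditionally, which is exactly what lets a single induction cover both branches.
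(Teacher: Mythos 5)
Your proof is correct and follows essentially the same route as the paper's: induction over iterations, with the key step being that $\tau_j=\max\{r_j,d_j-T\}$ satisfies $r^*\leq r_j\leq \tau_j\leq d_j-1<d^*$ for every window that newly counts $j$, so the added interval $[\tau_j-T,\tau_j+2T)$ supplies the one extra active unit needed in \Cref{lem:fea}. Your explicit pinning of the slot at $\tau_j$ (and the remark that EDF failure is not actually needed for feasibility) is, if anything, slightly more careful than the paper's wording.
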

\begin{proof}
    We prove it by induction. In the base case, $I$ is feasible for $J'$ when they are $\varnothing$. Then, assume the lemma is true after the $(j-1)$-th iteration. At the $j$-th iteration, we add a job $j$ to $J'$. This means that $\forall r^* \leq r_j,~ d^* \geq d_j$, $|J'(r^*,d^*)|$ will increase by one. If $\EDF(J',I)$ is already feasible, we are done. Otherwise, the algorithm will employ a new $3T$ length rent interval $[\tau_j - T, \tau_j + 2T)$. Notice that 
    $d^* \geq d_j \geq \tau_j$. Every $A_{I}(r^*,d^*)$ also increases by at least one. Thus, we still have $A_{I}(r^*,d^*)\geq |J'(r^*,d^*)|$ after we employ $[\tau_j-T,\tau_j+2T)$ (at the end of the $j$-th iteration). 
\end{proof}
\begin{corollary}
$\Semi(J,T)$ is feasible for $J$.
\end{corollary}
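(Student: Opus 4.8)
The plan is to obtain the Corollary as an immediate specialization of \Cref{lem:semifeasiblity}, which already carries all the combinatorial weight. First I would observe that \Cref{alg:semi} makes a single pass over all of $J$: the loop iterates over every $j \in J$ in non-decreasing order of $\tau_j$, and each iteration adds $j$ to the running set $J'$ and possibly extends the rent multiset $I$. Hence when the loop terminates we have $J' = J$, and the returned set $\Semi(J,T)$ is exactly the value of $I$ at that moment.

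With this setup the proof collapses to a one-step argument. \Cref{lem:semifeasiblity} states that at the end of \emph{each} iteration $I$ is feasible for the current $J'$; applying this invariant to the final iteration gives that the terminal $I$ is feasible for the terminal $J' = J$. Since $\Semi(J,T)$ equals this terminal $I$, we conclude that $\Semi(J,T)$ is feasible for $J$. No further case analysis or direct appeal to the Hall-type condition of \Cref{lem:fea} is needed here, because that condition was already verified inside the induction establishing \Cref{lem:semifeasiblity}.

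Consequently I do not expect any genuine obstacle in this particular step: the difficulty all lives in \Cref{lem:semifeasiblity} (the inductive preservation of the feasibility condition across an insertion of the $3T$-length interval) and, more pointedly, in the still-unproven cardinality half of \Cref{lem:semicost}, namely $|\Semi(J,T)| \le \OPT(J,T)$. The latter is where the real work will be, and I would attack it using \CoOPT together with \Cref{lem:coopt}: the idea would be to charge each interval that \Cref{alg:semi} inserts at a fail time $d$ against a rent interval of some $\CoOPT$ guaranteed by \Cref{lem:coopt} to lie in $[d-T,d)$, and then argue that distinct insertions can be charged to distinct \CoOPT intervals, so that $|\Semi(J,T)| \le |\CoOPT(\varnothing)| = \OPT(J,T)$ via \Cref{fac:coopt}. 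For the Corollary as stated, however, the terminal-iteration argument above already suffices.
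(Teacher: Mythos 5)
Your argument is correct and matches the paper exactly: the corollary is stated as an immediate consequence of \Cref{lem:semifeasiblity} applied to the final loop iteration, where $J' = J$ and the returned multiset is the terminal $I$. No further justification is needed.
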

\begin{lemma}
\label{lem:semimain}
    In the $j$-th iteration, if $\EDF(J',I)$ is infeasible in Algorithm \ref{alg:semi}, before we rent $[t-T,t+2T)$, we have
    $$
    |\CoOPT( I \cup \left\{[t-T,t+2T)\right\})| \leq |\CoOPT(I)| - 1.
    $$
\end{lemma}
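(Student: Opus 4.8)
The plan is to prove the bound by an exchange argument on the structured optimal complement supplied by \Cref{lem:coopt}. Write $t = \tau_j$ for the current job's key time, let $d$ be the fail time of the infeasible \EDF call, and set $B = [t-T,\,t+2T)$ for the $3T$-rent the algorithm is about to add. By \Cref{lem:coopt} there is a choice of $\CoOPT(I)$ containing a length-$T$ interval $[s,s+T)$ with $d - T \le s < d$. I would show that $[s,s+T)$ lies entirely inside $B$, so that deleting it from $\CoOPT(I)$ while handing the algorithm $B$ preserves feasibility; the deleted interval then witnesses $|\CoOPT(I \cup \{B\})| \le |\CoOPT(I)| - 1$.

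The crux is the containment $[s,s+T) \subseteq B$, which, via $d - T \le s < d$, reduces to pinning the fail time into the window $t \le d \le t+T$. Two points need care: the algorithm's \EDF runs on the partial set $J'$, whereas $\CoOPT$ is defined against the whole input $J$, so I must bound the fail time $d$ of $\EDF(J, I)$, the quantity \Cref{lem:coopt} refers to. For the upper bound, note that $d$ is at most the fail time of $\EDF(J', I)$, since adding the future jobs $J \setminus J'$ can only advance a failure; and that latter fail time is at most $t + T$ because every inserted job $j'$ has $\tau_{j'} \le t$, hence deadline $d_{j'} \le \tau_{j'} + T \le t+T$. This gives $s < t + T$, so $s + T \le t + 2T$. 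For the lower bound I would show $\EDF(J, I)$ cannot fail strictly before $t$: any violated window $[r^*, d^*)$ with $d^* < t$ contains only jobs of deadline below $t$, and every such job $j'$ satisfies $\tau_{j'} < t$ and was therefore inserted strictly before $j$; thus the window is witnessed entirely within $J' \setminus \{j\}$, contradicting the feasibility of $I$ for that set guaranteed by \Cref{lem:semifeasiblity}. Hence $d \ge t$ and $s \ge t - T$, completing $[s,s+T) \subseteq [t-T,\,t+2T)$.

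With the containment established, the exchange is immediate from \Cref{lem:fea}. On every window $[r^*,d^*)$ the long rent $B$ supplies at least as many active units as the deleted $[s,s+T)$, since $[s,s+T) \subseteq B$; hence $A_{I \cup \{B\} \cup (\CoOPT(I) \setminus \{[s,s+T)\})}(r^*,d^*) \ge A_{I \cup \CoOPT(I)}(r^*,d^*) \ge |J(r^*,d^*)|$, where the last inequality is the feasibility of $I \cup \CoOPT(I)$. Thus $\CoOPT(I) \setminus \{[s,s+T)\}$ is a feasible complement for $I \cup \{B\}$ of cardinality $|\CoOPT(I)| - 1$, and minimality of $\CoOPT$ yields the claim.

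The main obstacle is this two-sided control of the fail time, compounded by the mismatch between the set $J'$ that drives the \EDF failure and the set $J$ against which $\CoOPT$ and \Cref{lem:coopt} are defined; the exchange step is mechanical once the containment holds. The lower bound $d \ge t$ is the delicate direction, and it is exactly where the processing key $\tau_j = \max\{r_j, d_j - T\}$ and the induction hypothesis of \Cref{lem:semifeasiblity} (that $I$ is feasible for the previously inserted jobs) are both indispensable: any slack in either would permit a failure before $t$, pushing part of $[s,s+T)$ outside the $3T$-rent and destroying the domination.
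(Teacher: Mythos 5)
Your proof is correct and follows essentially the same route as the paper's: pin the \EDF fail time into $[\tau_j,\tau_j+T]$ using the enumeration order of $\tau$ together with \Cref{lem:semifeasiblity}, invoke \Cref{lem:coopt} to locate a $\CoOPT(I)$ interval of length $T$ inside the $3T$-rent $[\tau_j-T,\tau_j+2T)$, and delete it via the exchange/domination argument. The one difference is that you explicitly bridge the mismatch between the fail time of $\EDF(J',I)$ (what the algorithm observes) and that of $\EDF(J,I)$ (what \Cref{lem:coopt} and the definition of $\CoOPT$ against the full input require), using monotonicity of the fail time under adding jobs; the paper elides this step, so your version is in fact slightly more careful on a point worth being careful about.
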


\begin{proof}
    By the condition of the lemma and \Cref{lem:semifeasiblity}, we have $\EDF(J', I)$ is infeasible while $\EDF(J' \setminus \{j\}, I)$ is feasible. By the enumerating order, all the jobs $j'$ in $J'$ must satisfy $\max\{r_j,d_{j'}-T\} \leq \tau_j$. Therefore, $\EDF(J', I)$ must fail at a deadline $d \leq \tau_j+T$. On the other hand, for all $j' \in J$ such that $d_{j'} < \tau_j$, we must have $j' \in J' \setminus \{ j \}$, also because of the enumerating order. Thus, we can show that $d \geq \tau_j$. Otherwise, $J' \setminus \{ j\}$ would be infeasible for $I$, which is a contradiction. In conclusion, we show that the failure time $d$ of $\EDF(J',I)$ satisfies
    $
    \tau_j \leq d < \tau_j + T.
    $
    
	Finally, by \Cref{lem:coopt}, there exists a \CoOPT with rent interval $[s,s+T]$ such that $d-T \leq s < d$. It implies that $\tau_j - T < s < \tau_j + T$. Therefore $[s, s+T)$ is always a subset of $[\tau_j - T, \tau_j + 2T)$. We have
	$$
	|\CoOPT( I \cup \left\{[t-T,t+2T)\right\})| \leq |\CoOPT( I \cup \{ s \})| = |\CoOPT(I)| - 1.
	$$
    
    
\end{proof}

\begin{proof}[Proof of \Cref{lem:semicost}]
    Recall \Cref{fac:coopt} that $\CoOPT(\varnothing) = \OPT$. It follows that $\sf{SemiOnline}$ rents at most $\OPT$ times as a corollary of \Cref{lem:semimain}.
\end{proof}








\section{The 6-competitive Online Algorithm}
\label{sec:finalalgorithm}


It remains to define the rent scheme for each \Batch in the framework. For completeness, we formally describe the algorithm in \Cref{alg:online}.
\begin{algorithm}
  \caption{The Online algorithm}
  \label{alg:online}
  \begin{algorithmic}
  \Procedure{{\sf OnlineRent}}{$t$: time, $J$: known jobs, $T$: length of rent}
  \State $\Delta = |{\sf SemiOnline}(J_t, T)| - |{\sf SemiOnline}(J_{t - 1}, T)|$
  \State Perform $\Delta$ \Batch at time $t$, each consists of $6$ machines: $4$ at $t$ and $2$ at $t + T$.
  \State {\bf schedule} jobs at $t$ following $\EDF(J,I)$, where $I$ is the current online rent set.  
  \EndProcedure
  \end{algorithmic}
\end{algorithm}

Next, we prove the property 3) of \Cref{lem:oracle} by our design of \Batch, i.e., to show \Cref{alg:online} is feasible for the total job set $\calJ$. Combining with the property 1) and 2) by the semi-online algorithm, we can conclude our online algorithm is $6$-compeititve as claimed in \Cref{thm:main}. 

First, we introduce an obvious relationship between online and semi-online algorithms.
\begin{fact}
    \label{lem:onlineofflinemapping}
    At every moment $t$, there always exists a bijection from one semi-online rent batch to one online rent batch, such that both batches are at the same time: $4\times[t, t + T) + 2 \times [t + T, t + 2T) \text{ in \textup{\sf Online}} \mapsto [t - T, t + 2T) \text{ in \textup{\sf SemiOnline}}$.
\end{fact}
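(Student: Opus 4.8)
The plan is to reduce this Fact to a step-by-step counting argument: at each time point both algorithms create new batches in equal numbers and with fixed, mutually corresponding shapes, and these per-step pairings assemble into the claimed global bijection. The engine is the Strong Monotonicity of \Cref{lem:semionlineproperty}, which pins down both the count and the shape of the new semi-online intervals at each transition, so that the semi-online side and the online side can be matched transition by transition.

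First I would fix a time $t'$ and count the batches freshly created on each side when the clock advances from $t'-1$ to $t'$. On the semi-online side, \Cref{lem:semionlineproperty}(1) gives $\Semi(J_{t'-1}, T) \subseteq \Semi(J_{t'}, T)$, so the new intervals form the multiset difference $\Semi(J_{t'}, T) \setminus \Semi(J_{t'-1}, T)$, whose cardinality is exactly $\Delta_{t'} \defeq |\Semi(J_{t'}, T)| - |\Semi(J_{t'-1}, T)| \geq 0$; by \Cref{lem:semionlineproperty}(2) each of these intervals equals the fixed $[t'-T, t'+2T)$. On the online side, \Cref{alg:online} performs by definition exactly $\Delta_{t'}$ \Batch operations at $t'$, each being $4 \times [t', t'+T) + 2 \times [t'+T, t'+2T)$; this quantity is well defined and nonnegative precisely because of monotonicity. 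Hence at every $t'$ the two sides produce the same number $\Delta_{t'}$ of new batches.

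I would then define the bijection locally and glue: at each $t'$, pair the $\Delta_{t'}$ online batches created at $t'$ with the $\Delta_{t'}$ semi-online intervals $[t'-T, t'+2T)$ created at the same transition, which are the batches the statement regards as being at the same time $t'$. Since neither algorithm creates any batch except at such transitions, the union of these local pairings over all $t' \leq t$ is the desired global bijection at the moment $t$. I do not anticipate a genuine obstacle: the only point needing care is that the per-step counts agree exactly, which is immediate because \Cref{alg:online} defines $\Delta$ by the very same expression $|\Semi(J_t, T)| - |\Semi(J_{t-1}, T)|$ that governs the growth of the semi-online solution, so the two sides never drift apart and the fixed shapes line up step by step.
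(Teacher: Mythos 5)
Your proposal is correct and follows essentially the same route as the paper's own proof: both rely on the Strong Monotonicity of \Cref{lem:semionlineproperty} to identify the new semi-online intervals at each transition as copies of $[t-T, t+2T)$, match their count to the $\Delta_t$ online \Batch operations created by \Cref{alg:online} at the same time, and pair them off transition by transition. Your version simply spells out the counting and gluing steps that the paper leaves implicit.
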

\begin{proof}
    This fact is directly implied by the second property of \Cref{lem:semionlineproperty}. Whenever $\Semi(J_t,T)$ increases from $\Semi(J_{t-1},T)$ by some rent intervals of $[t - T, t + 2T)$, the new batches must be $4\times[t, t + T) + 2 \times [t + T, t + 2T)$. Each of them can correspond to one $[t - T, t + 2T)$.
\end{proof}

We prove the feasibility by showing that the active units provided by the online algorithm are always enough for the possible jobs inside any possible interval $[r^*,d^*)$.  



\begin{lemma}
    \label{thm:online:lemma1}
    Let $I$ be the rent sets made by our algorithm. We have 
    $ \forall r^* \leq d^* \in \mathbb N,~ A(r^*,d^*)\geq |\calJ(r^*,d^*)|$, where we use $A$ to denote $A_I$ for simplicity.
\end{lemma}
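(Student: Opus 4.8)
The goal is to verify the feasibility condition from Lemma~\ref{lem:fea} for the online rent set $I$ against the full job set $\calJ$. My plan is to reduce this to a statement about the semi-online solution $\Semi(\calJ,T)$, whose feasibility for $\calJ$ is already guaranteed by Lemma~\ref{lem:semicost} (via the corollary to Lemma~\ref{lem:semifeasiblity}). That feasibility gives us, for every $r^* \le d^*$, the inequality $A_{\Semi(\calJ,T)}(r^*,d^*) \ge |\calJ(r^*,d^*)|$. So it suffices to show that the online rent set provides, on every interval $[r^*,d^*)$, at least as many active units as the semi-online solution does: $A(r^*,d^*) \ge A_{\Semi(\calJ,T)}(r^*,d^*)$. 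If I can establish this interval-by-interval domination of active units, the lemma follows by chaining the two inequalities.

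To get the active-unit domination I would use the bijection of Fact~\ref{lem:onlineofflinemapping}. By the strong monotonicity of Lemma~\ref{lem:semionlineproperty}, each rent in $\Semi(\calJ,T)$ is a copy of some interval $[t-T,t+2T)$ created at the moment $t$, and the bijection pairs it with an online batch of six machines at the same $t$, namely $4$ copies of $[t,t+T)$ plus $2$ copies of $[t+T,t+2T)$. So the whole comparison reduces to a single local claim: for one semi-online interval $[t-T,t+2T)$ and its paired online batch, the batch contributes at least as many active units on any fixed query interval $[r^*,d^*)$ as the single length-$3T$ interval does. Summing this local inequality over all matched pairs then yields $A(r^*,d^*) \ge A_{\Semi(\calJ,T)}(r^*,d^*)$ for every $[r^*,d^*)$, since both sides decompose additively over the rents/batches.

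The core of the proof is therefore the local, per-batch comparison, and I expect this to be the main obstacle. The semi-online interval $[t-T,t+2T)$ covers the three unit-length blocks $[t-T,t)$, $[t,t+T)$, and $[t+T,t+2T)$, each with multiplicity one, whereas the online batch covers $[t,t+T)$ with multiplicity $4$ and $[t+T,t+2T)$ with multiplicity $2$ but gives \emph{nothing} on $[t-T,t)$. The delicate point is that the online batch is deficient precisely on the early block $[t-T,t)$, so I cannot argue pointwise; I must argue over whole query intervals $[r^*,d^*)$. The right way to handle this is to examine how $[r^*,d^*)$ intersects the three blocks and track the running surplus. As long as $[r^*,d^*)$ touches $[t-T,t)$, it must (by contiguity of the interval) also fully or partially include the later blocks where the online batch has large multiplicity; the surplus of $4$ and $2$ on the later blocks should compensate for the deficit of $1$ on the early block. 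Concretely, the per-unit deficit on $[t-T,t)$ is at most $T$ total, while crossing into $[t,t+T)$ already buys a surplus of $3$ per unit against the semi-online $1$, so even one unit of overlap into the middle block begins to pay back the deficit; I would formalize this by a short case analysis on which blocks $[r^*,d^*)$ meets, and in each case verify the online batch's contribution dominates. This mirrors the geometric active-unit comparison already carried out in the proof of Lemma~\ref{lem:coopt} (Case~2), so I would lean on the same style of argument.

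Once the per-batch domination is in hand, the rest is bookkeeping: invoke the bijection to sum over all batches, combine with the feasibility of $\Semi(\calJ,T)$ for $\calJ$, and conclude $A(r^*,d^*) \ge |\calJ(r^*,d^*)|$ for all $r^* \le d^*$, which is exactly the statement of the lemma and, by Lemma~\ref{lem:fea}, establishes property~3) of Lemma~\ref{lem:oracle}.
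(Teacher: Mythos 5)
There is a genuine gap, and it sits exactly where you predicted the main obstacle would be: the per-batch domination claim is false, and so is the global reduction it is meant to support. You assert that if a query interval $[r^*,d^*)$ touches the early block $[t-T,t)$ of a semi-online rent, then ``by contiguity'' it must also reach into the later blocks where the online batch has multiplicity $4$ and $2$. Nothing forces this: the query interval can end at or before $t$. Take a single batch decided at time $t$ and the query interval $[r^*,d^*)=[t-T,t)$. The semi-online interval $[t-T,t+2T)$ contributes $T$ active units to this window, while the paired online batch $4\times[t,t+T)+2\times[t+T,t+2T)$ contributes $0$. More generally, for $d^*$ only slightly larger than $t$ and $r^*\leq t-T$, the semi-online rent contributes roughly $T$ units inside $[r^*,d^*)$ while the online batch contributes only about $4(d^*-t)$, which is smaller once $T$ is moderately large. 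Since both sides decompose additively over batches, a single such batch already refutes the interval-by-interval inequality $A(r^*,d^*)\geq A_{\Semi(\calJ,T)}(r^*,d^*)$. The lemma is nonetheless true because in such configurations the jobs that would consume those early semi-online units were released earlier and therefore triggered \emph{earlier} online batches --- but that is information your reduction discards by comparing active units alone.

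The paper's proof is built precisely to avoid this trap. Rather than dominating the semi-online active units everywhere, it fixes $[r^*,d^*)$, partitions it into geometrically shrinking sub-intervals via the points $\theta_i$, and bounds the number of jobs $\calJ_i$ released in the $i$-th sub-interval by the active units of the semi-online rents allocated \emph{by time} $\lfloor\theta_i\rfloor$ (Lemma~\ref{lem:semi_job_number}); it then runs a charging argument (Lemmas~\ref{lem:activeunits} and~\ref{lem:job_number_general}) in which each online batch's contribution to the lower bound on $A(r^*,d^*)$ is matched against its contribution to the upper bound on $|\calJ(r^*,d^*)|$, summed over all scales $i$. The geometric decay $2^{-i}L$ of the sub-interval lengths is what absorbs the fact that a single batch can be ``charged'' by jobs from every later sub-interval. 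To repair your argument you would need to replace the active-unit domination with a statement that accounts for when jobs become visible and emergent, which essentially forces you back to the paper's multi-scale charging scheme.
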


We remark that \Cref{thm:online:lemma1} provides the necessary information to prove the correctness via the feasibility lemma (\Cref{lem:fea}). It only remains to complete the proof of \Cref{thm:online:lemma1}. 

\subsection{Proof of Lemma \ref{thm:online:lemma1}}
Let us fix an arbitrary range $r^*\leq d^*$, and discuss $A(r^*,d^*)$ and $\calJ(r^*,d^*)$ separately.
First, we discuss $A(r^*,d^*)$.  The behavior of the online algorithm can be represented by a set of online rent batches. Note that only those rent batches that start in $\left(r^*-2T,d^*\right)$ can provide active units inside $\left[r^*,d^*\right)$. Therefore, we only discuss a subset $B$ of all rent batches with a start time in $\left(r^*-2T,d^*\right)$. Each $b$ means a \Batch made by \Cref{alg:online}. We use $t(b)$ to mean its decision time. That is, a batch $b$ contains 4 rent intervals of $[t(b),t(b)+T)$, and 2 of $[t(b)+T,t(b)+2T)$.  

We partition the time interval $[r^*,d^*)$ by several critical time points. The first time point is 
$
\theta_1=\max\left\{\left(\frac{r^*+d^*}{2}\right), d^*-T\right\}.
$ 
It means that $\theta_1 = d^*-T$ when $d^*-r^*\geq 2T$ and $\theta_1=\left(\frac{r^*+d^*}{2}\right)$ when $d^*-r^* < 2T$. Remark that in both cases, $\theta_1 \geq d^*-T$. Then we recursively define $\theta_{i + 1}=\left(\frac{\theta_{i} + d^*}{2}\right)$ for every $i\geq 1$ until $\lfloor \theta_i \rfloor = d^*-1$. Besides, we let $\theta_0=r^* - 1$.
For $i\geq 1$, we call $\left[\left\lfloor \theta_{i - 1} \right\rfloor + 1, \left\lfloor \theta_{i} \right\rfloor\right]$ the $i$-th sub-interval, which is the minimal sub-interval of $(\theta_{i - 1}, \theta_{i}]$ that contains all integers in it. We define $B_i\subseteq B$ as the set of rent batches starting in the $i$-th sub-interval. Moreover, we let $B_0$ be the set of rent batches starting in $(r^*-2T, r^*]$.





For each $b\in B$, recall that $t(b)$ is the time it was allocated, and we let $\lambda_i(b)$ be the length of the intersecting interval of $[t(b),t(b)+2T)$ and $\left[\left\lfloor \theta_{i - 1} \right\rfloor + 1, d^*\right)$. Note that $\lambda_1(b)$ represents the intersecting interval with the whole $\left[r^*, d^*\right)$. Let $L = \min\{2T, d^* - r^*\} = 2 (d^* - \theta_1)$ denote the maximum possible length in $\lambda_1$, and by our partition method. It follows the property of the length of sub-intervals by our partition.

\begin{lemma}[Partition length property]
    \label{lem:L}
    For any batch $b \in B_i$ where $i \geq 1$, the intersection of $\left[t(b), t(b) + 2T\right)$ and $[\left\lfloor \theta_{i - 1} \right\rfloor + 1, d^*)$ satisfies:
    $
    2^{-i} \cdot L \leq \lambda_i(b)  \leq 2^{1-i}\cdot L.
    $
\end{lemma}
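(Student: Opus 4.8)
The plan is to reduce the whole statement to a single closed-form expression for the ``gap'' $d^* - \theta_i$, and then read off both inequalities by simply locating the decision time $t(b)$ between two consecutive critical points. First I would establish, by induction on $i$, that $d^* - \theta_i = L \cdot 2^{-i}$ for every $i \geq 1$. The base case is exactly the given identity $L = 2(d^* - \theta_1)$, i.e. $d^* - \theta_1 = L/2$. For the inductive step, the recursion $\theta_{i+1} = (\theta_i + d^*)/2$ gives
$$
d^* - \theta_{i+1} = d^* - \frac{\theta_i + d^*}{2} = \frac{d^* - \theta_i}{2},
$$
so the gap halves at each step and the closed form follows immediately. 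This geometric decay is really the whole arithmetic content of the lemma.

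Next I would pin down what $\lambda_i(b)$ actually equals. For $b \in B_i$ the definition gives $t(b) \in [\lfloor\theta_{i-1}\rfloor + 1, \lfloor\theta_i\rfloor]$. In particular $t(b) \geq \lfloor\theta_{i-1}\rfloor + 1$, so the active interval $[t(b), t(b)+2T)$ starts at or to the right of the left endpoint of the target window $[\lfloor\theta_{i-1}\rfloor + 1, d^*)$. Hence the intersection begins exactly at $t(b)$, and
$$
\lambda_i(b) = \min\{t(b)+2T,\ d^*\} - t(b) = \min\{2T,\ d^* - t(b)\}.
$$
Then I would sandwich $t(b)$ using the $B_i$ membership: from $t(b) \leq \lfloor\theta_i\rfloor \leq \theta_i$ and $t(b) \geq \lfloor\theta_{i-1}\rfloor + 1 > \theta_{i-1}$, together with the closed form of Step~1, I get $L \cdot 2^{-i} = d^* - \theta_i \leq d^* - t(b) < d^* - \theta_{i-1} = L \cdot 2^{1-i}$. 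Note that the strict inequality $\lfloor\theta_{i-1}\rfloor + 1 > \theta_{i-1}$ is precisely what lets the floor-based left boundary deliver the clean (non-strict) upper bound.

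Finally I would conclude both directions. The upper bound is immediate: $\lambda_i(b) \leq d^* - t(b) < L \cdot 2^{1-i}$. For the lower bound I would use $L = \min\{2T, d^*-r^*\} \leq 2T$, so that $2T \geq L \geq L \cdot 2^{-i}$; combining this with $d^* - t(b) \geq L \cdot 2^{-i}$ shows that both arguments of the minimum are at least $L \cdot 2^{-i}$, whence $\lambda_i(b) = \min\{2T, d^* - t(b)\} \geq L \cdot 2^{-i}$.

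I do not expect a genuine obstacle here; the statement is essentially bookkeeping once the decay formula is in hand. The two places that require care are confirming the intersection truly starts at $t(b)$ (which relies on $b \in B_i \Rightarrow t(b) \geq \lfloor\theta_{i-1}\rfloor + 1$) and ensuring that the $\min$ with $2T$ never undercuts the lower bound (which relies on $L \leq 2T$). The base case $d^* - \theta_1 = L/2$ is supplied by the definition of $L$, and the $i=1$ window $[\lfloor\theta_0\rfloor + 1, d^*) = [r^*, d^*)$ matches the remark that $\lambda_1$ measures the intersection with the full $[r^*, d^*)$, so no special casing is needed.
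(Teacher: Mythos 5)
Your overall strategy coincides with the paper's: derive the closed form $d^*-\theta_i = 2^{-i}\cdot L$ for $i\geq 1$, identify $\lambda_i(b)=\min\{2T,\,d^*-t(b)\}$, and sandwich $t(b)$ between consecutive critical points. Your argument is sound for the lower bound at every $i$ and for the upper bound at every $i\geq 2$. The gap is in the upper bound for $i=1$. You apply the identity $d^*-\theta_{i-1}=L\cdot 2^{1-i}$ uniformly and then drop the minimum via $\lambda_i(b)\leq d^*-t(b)<L\cdot 2^{1-i}$. But that identity is only valid when $i-1\geq 1$: the zeroth critical point is defined separately as $\theta_0=r^*-1$, so $d^*-\theta_0=d^*-r^*+1$, which is \emph{not} $L=\min\{2T,\,d^*-r^*\}$ and can exceed it by an arbitrary amount when $d^*-r^*>2T$. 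Concretely, take $T=5$, $r^*=0$, $d^*=100$, and $t(b)=0\in B_1$: then $L=2T=10$ and $\lambda_1(b)=\min\{2T,\,d^*-t(b)\}=10=L$, whereas your chain asserts $\lambda_1(b)\leq d^*-t(b)=100<L\cdot 2^{0}=10$, which fails at the second step. The lemma's conclusion $\lambda_1(b)\leq L$ is still true, but your derivation of it is broken, and your closing claim that ``no special casing is needed'' for $i=1$ is precisely where the error hides.

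The repair is what the paper does: for $i=1$, keep the minimum on both sides of the comparison rather than discarding it. From $t(b)\geq\lfloor\theta_0\rfloor+1=r^*$ you get $\lambda_1(b)=\min\{2T,\,d^*-t(b)\}\leq\min\{2T,\,d^*-r^*\}=L$, and from $t(b)\leq\theta_1$ together with $d^*-\theta_1=L/2\leq T$ you get $\lambda_1(b)\geq\min\{2T,\,d^*-\theta_1\}=L/2$. With that one case handled separately, the rest of your write-up (the induction for the decay formula, the identification of $\lambda_i(b)$ as a minimum, and the $i\geq 2$ sandwich) goes through as stated.
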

\begin{proof}
    For $i = 1$, $\lambda_1(b) = \min \{2T, d^* - t(b)\}$. By definition, $r^*\leq t(b) \leq \lfloor \theta_1 \rfloor $, hence
    $$ L / 2 = \min \{T, d^* - \theta_1\} \leq \min \{2T, d^* - t(b)\} \leq \min \{2T, d^* - r^*\} = L.$$

    For $i \geq 2$, by definition $d^* - \theta_i = 2^{-i} \cdot L$. Because $t(b) \in \left[\left\lfloor \theta_{i - 1} \right\rfloor + 1, \left\lfloor \theta_{i} \right\rfloor\right] \subseteq (\theta_{i-1},\theta_i]$, we have $ 2^{-i} \cdot L \leq  \lambda_i(b)  \leq \min\{2T,~2^{1-i}\cdot L\}$. Since $2 \cdot L \leq 2T$, we conclude the lemma.  
\end{proof}



Then, we present the lemmas for a lower bound of active units and an upper bound of job numbers. 

\begin{lemma}[Lower bound of active units]
    \label{lem:activeunits}
    $$
    \begin{aligned}
    A(r^*,d^*)\geq &\sum_{b\in B_0} \left(2 \lambda_1(b) + 2 \max\{\lambda_1(b) - T, 0\}\right) \\
    &+ \sum_{b \in B_1} \left(2\lambda_1(b) + 2\min\{\lambda_1(b), T\}\right) \\
    &+ \sum_{i \geq 2} 4 \cdot(2 ^ {-i} \cdot L) \cdot |B_i|.
    \end{aligned}
    $$
\end{lemma}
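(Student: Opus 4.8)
The plan is to break $A(r^*,d^*)$ into per-batch contributions and prove a matching lower bound for each of the three regimes $B_0$, $B_1$, and $B_i$ $(i\ge 2)$ separately, then sum. Concretely, I would write $A(r^*,d^*)=\sum_{b\in B}A_b$, where $A_b \defeq 4\,|[t(b),t(b)+T)\cap[r^*,d^*)| + 2\,|[t(b)+T,t(b)+2T)\cap[r^*,d^*)|$ is the number of active units batch $b$ places inside $[r^*,d^*)$; only batches with start time in $(r^*-2T,d^*)$ (i.e.\ those in $B$) can contribute, and $\{B_i\}_{i\ge 0}$ partition $B$. Abbreviating by $x(b)$ and $y(b)$ the overlaps of the $4$-machine head $[t(b),t(b)+T)$ and the $2$-machine tail $[t(b)+T,t(b)+2T)$ with $[r^*,d^*)$, these two halves partition $[t(b),t(b)+2T)$, so $\lambda_1(b)=x(b)+y(b)$ and hence $A_b=4x(b)+2y(b)=2\lambda_1(b)+2x(b)$. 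This identity is the workhorse: in every regime the bound reduces to lower-bounding $x(b)$.

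For $B_0$, where $t(b)\le r^*$, I would use the two trivial facts $x(b)\ge 0$ and $x(b)=\lambda_1(b)-y(b)\ge \lambda_1(b)-T$, the latter because the tail has length $T$ so $y(b)\le T$. Together these give $x(b)\ge \max\{\lambda_1(b)-T,0\}$, and therefore $A_b\ge 2\lambda_1(b)+2\max\{\lambda_1(b)-T,0\}$, matching the first sum (with no need to split on whether the head reaches past $r^*$). For $B_1$, where $r^*\le t(b)\le \lfloor\theta_1\rfloor$, the head now starts at or after $r^*$, so its overlap is exactly $x(b)=\min\{T,\,d^*-t(b)\}$; a one-line check that $\min\{\lambda_1(b),T\}=\min\{T,\,d^*-t(b)\}=x(b)$ then yields $A_b=2\lambda_1(b)+2\min\{\lambda_1(b),T\}$, which is in fact an equality.

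For $B_i$ with $i\ge 2$, the decisive observation is that $t(b)>\theta_{i-1}\ge\theta_1\ge d^*-T$, so $t(b)+T>d^*$ and the $2$-machine tail lies entirely beyond $d^*$ and contributes nothing, i.e.\ $y(b)=0$; meanwhile the head contributes $x(b)=d^*-t(b)=\lambda_i(b)$. Hence $A_b=4\lambda_i(b)$, and \Cref{lem:L} gives $\lambda_i(b)\ge 2^{-i}L$, so $A_b\ge 4\cdot 2^{-i}L$. Summing over $b\in B_i$ and over all $i\ge 2$ produces the third sum, and combining the three regimes reconstructs exactly the claimed right-hand side.

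The main obstacle, and the only genuinely delicate step, is the $B_i$ $(i\ge 2)$ regime: one must argue that late-starting batches completely waste their $2$-machine tail (which rests on $\theta_1\ge d^*-T$ together with the monotonicity of the $\theta_i$), and then convert each surviving head into the geometric term $4\cdot 2^{-i}L$ via the partition-length bound of \Cref{lem:L}; the head-versus-tail bookkeeping must be tracked carefully since which half overlaps $[r^*,d^*)$ differs between early and late batches. A secondary care point is keeping the decomposition an honest partition: the single shared boundary start time $t(b)=r^*$ must be charged to exactly one class, which is harmless because the $B_0$ inequality remains a valid (if loose) lower bound there while the $B_1$ inequality is tight.
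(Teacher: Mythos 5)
Your proposal is correct and follows essentially the same route as the paper: the paper's proof treats the $B_0$ and $B_1$ terms as ``straightforward counting'' (deferring to a figure) and handles $i\ge 2$ via the identity $\lambda_1(b)=\lambda_i(b)$ together with \Cref{lem:L}, which is exactly your head/tail decomposition $A_b=2\lambda_1(b)+2x(b)$ made explicit. Your observation that the tail of a batch in $B_i$ ($i\ge2$) lies entirely past $d^*$ (so the full $4\lambda_i(b)$ survives) and your note on the boundary case $t(b)=r^*$ are the right details to pin down, and they match the paper's intent.
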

\begin{proof}
    \begin{figure}
        \centering
        \includegraphics[width=120mm]{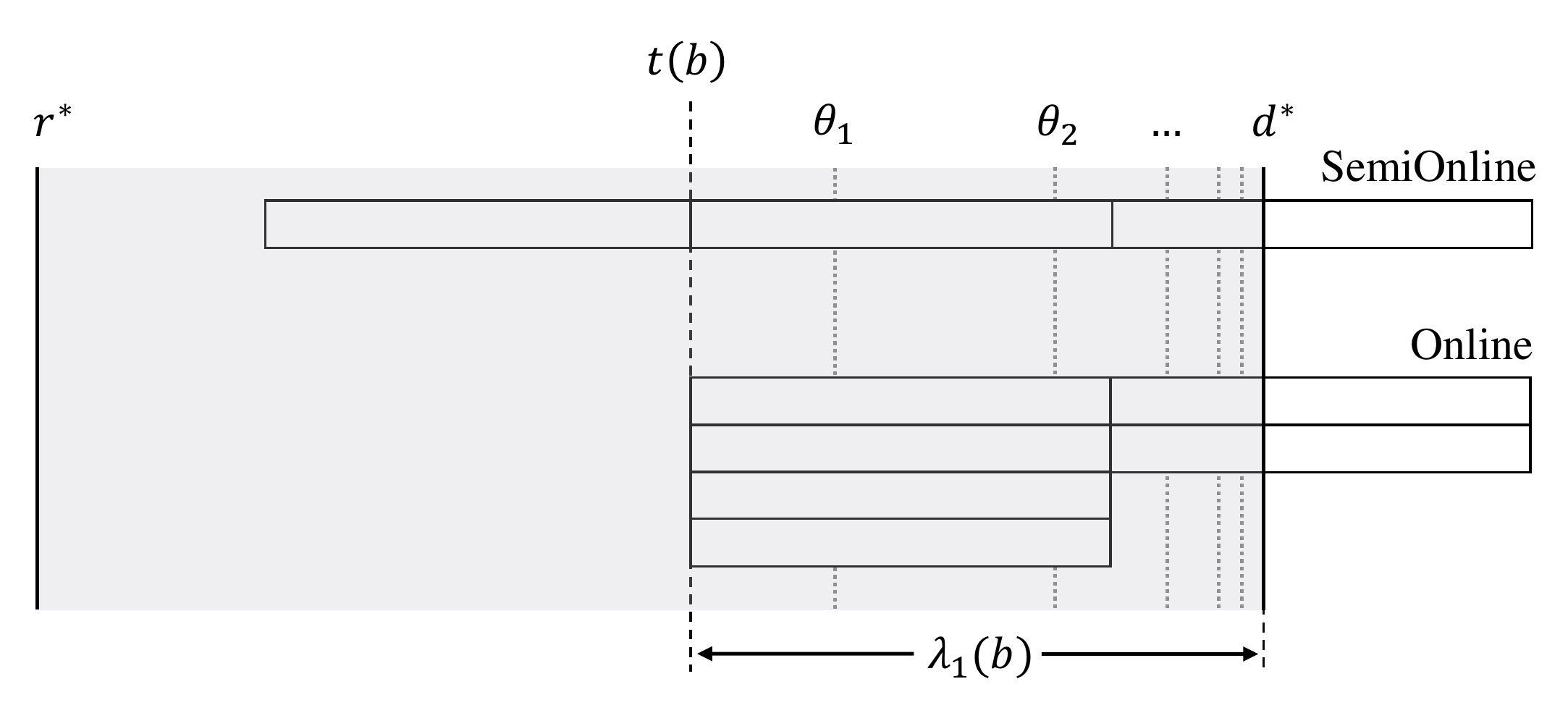}
        \caption{An example for $b \in B_1$ when $d^*-r^* > 2T$. The shaded area is the considered $[r^*, d^*)$, and the online rent batch provides $2\lambda_1(b) + 2\min\{\lambda_1(b), T\}$ active units to it.} 
        \label{fig:lambda}
    \end{figure}
    Three terms on the RHS of the inequality are counting of $B_0, B_1$ and $B_{\geq 2}$, where the first two are straightforward counting as shown in \Cref{fig:lambda}, and the last term was scaled down a bit by \Cref{lem:L}: 
    $$
        4  \sum_{i \geq 2} \sum_{b \in B_i} \lambda_1(b)
        = 4  \sum_{i \geq 2} \sum_{b \in B_i} \lambda_i(b)
        \geq \sum_{i \geq 2} 4 \cdot (2 ^ {-i} \cdot L) \cdot |B_i|.
    $$
\end{proof}
Let $\mathcal J_i$ be the job set with deadline at most $d^*$ and released in the $i$-th subinterval:
$$\mathcal J_i=\left\{j\in \calJ \mid \lfloor \theta_{i-1}\rfloor + 1 \leq r_j \leq \lfloor \theta_i \rfloor,\ d_j\leq d^*\right\}.$$
We provide an upper bound of $\mathcal J_i$ by the performance of our algorithm. 
\begin{lemma}
\label{lem:semi_job_number}
    Let $I_t$ be the semi-online batches allocated at or before time $t$. 
    We have that $\calJ_i$ is no more than the active units after $\lfloor \theta_{i - 1} \rfloor + 1$ provided by \Semi at $\lfloor \theta_{i} \rfloor$.
    i.e.,
    $$\mathcal J_i \leq A_{I_{\lfloor \theta_{i} \rfloor}} (\lfloor \theta_{i - 1} \rfloor + 1 , d^*).$$
\end{lemma}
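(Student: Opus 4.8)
The plan is to reduce the statement to the feasibility of the semi-online rent set restricted to time $\lfloor \theta_i \rfloor$. By the strong monotonicity of \Cref{lem:semionlineproperty}, the batches allocated at or before time $\lfloor \theta_i \rfloor$ are exactly $\Semi(J_{\lfloor \theta_i \rfloor}, T)$, so $I_{\lfloor \theta_i \rfloor} = \Semi(J_{\lfloor \theta_i \rfloor}, T)$, and by the corollary following \Cref{lem:semifeasiblity} this rent set is feasible for $J_{\lfloor \theta_i \rfloor}$. Applying the feasibility characterization (\Cref{lem:fea}) to the window $[\lfloor \theta_{i-1}\rfloor + 1, d^*)$ then gives
$$
A_{I_{\lfloor \theta_i \rfloor}}(\lfloor \theta_{i-1}\rfloor + 1, d^*) \ge \bigl| J_{\lfloor \theta_i \rfloor}(\lfloor \theta_{i-1}\rfloor + 1, d^*) \bigr|.
$$
Hence it suffices to prove the containment $\mathcal J_i \subseteq J_{\lfloor \theta_i \rfloor}(\lfloor \theta_{i-1}\rfloor + 1, d^*)$, after which the desired bound on $|\mathcal J_i|$ (which the statement writes simply as $\mathcal J_i$) follows by chaining cardinalities.

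To verify this containment I would fix an arbitrary $j \in \mathcal J_i$ and check the two defining conditions of the right-hand set. The release/deadline bounds $\lfloor \theta_{i-1}\rfloor + 1 \le r_j$ and $d_j \le d^*$ are immediate from the definition of $\mathcal J_i$, and $r_j < d_j$ holds for every unit job; thus $j$ already satisfies $\lfloor \theta_{i-1}\rfloor + 1 \le r_j < d_j \le d^*$, the interval condition required by \Cref{lem:fea}. The remaining task is to show $j \in J_{\lfloor \theta_i \rfloor}$, that is, that $j$ is visible and emergent at time $\lfloor \theta_i \rfloor$. Visibility is immediate from $r_j \le \lfloor \theta_i \rfloor$, which is exactly the upper bound on $r_j$ built into the definition of $\mathcal J_i$, so this is the reason the definition of $\mathcal J_i$ caps releases at $\lfloor \theta_i \rfloor$.

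The one step that needs a genuine argument — and what I expect to be the crux of the lemma — is emergence: I need $d_j \le \lfloor \theta_i \rfloor + T$, and this is where the placement of the critical points $\theta_i$ is used. Since $\theta_i \ge \theta_1 \ge d^* - T$ for every $i \ge 1$ and $d^* - T$ is an integer, taking floors gives $\lfloor \theta_i \rfloor \ge d^* - T$, hence $\lfloor \theta_i \rfloor + T \ge d^* \ge d_j$, which is emergence. Combining the containment with the feasibility inequality above then closes the proof. I anticipate no substantive obstacle beyond this; the only points demanding care are the integrality/rounding when passing from $\theta_1 \ge d^* - T$ to $\lfloor \theta_1 \rfloor \ge d^* - T$, and the implicit reading of the left-hand side as $|\mathcal J_i|$.
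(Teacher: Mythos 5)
Your proposal is correct and follows essentially the same route as the paper's proof: both reduce the claim to the feasibility of $\Semi(J_{\lfloor\theta_i\rfloor},T)$ for $J_{\lfloor\theta_i\rfloor}$ and then observe that every job of $\mathcal J_i$ lies in $J_{\lfloor\theta_i\rfloor}(\lfloor\theta_{i-1}\rfloor+1,d^*)$ because $\lfloor\theta_i\rfloor \geq d^*-T$ makes all such jobs visible and emergent. Your write-up merely makes the visibility/emergence check and the rounding step more explicit than the paper does.
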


\begin{proof}
    Let us observe the time point $t=\lfloor \theta_{i} \rfloor$. $\Semi(J_t,T)$ reports $I_t$ at this time. By \Cref{lem:semifeasiblity}, $I_t$ is feasible for $J_t$. By the definition of $\theta_i$, we prove $t \geq d^* - T$ because $\theta_1 \geq d^* - T$. Thus, all jobs with deadlines at most $d^*$ are already in $J_t$, and combining with the feasibility of $I_t$, we have:
    $$
        \calJ_i = J_t(\lfloor \theta_{i - 1} \rfloor + 1, d^*) \leq A_t(\lfloor \theta_{i - 1} \rfloor + 1, d^*). 
    $$
    The lemma then concludes because we define $t=\lfloor \theta_{i} \rfloor$.
\end{proof}

\begin{lemma}[Upper bound of job numbers]
\label{lem:job_number_general}
We have two different upper bounds for $\mathcal J_i$:

\begin{itemize}
    \item $i = 1$: $\quad \displaystyle |\mathcal J_1| \leq \sum_{b\in B_0} \lambda_1(b) + \sum_{b \in B_1} \left(\lambda_1(b) + L/2\right).$
    \item $i \geq 2$: $\quad \displaystyle |\mathcal J_i| \leq \sum_{b\in B_0} \lambda_i(b) + 2 \cdot (2 ^ {-i} \cdot L) \cdot \sum_{j=1}^{i} |B_j|.$
\end{itemize}

\end{lemma}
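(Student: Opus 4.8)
The plan is to start from \Cref{lem:semi_job_number}, which already reduces the task to bounding the semi-online active units: $|\mathcal J_i| \leq A_{I_{\lfloor \theta_i \rfloor}}(\lfloor \theta_{i-1}\rfloor + 1, d^*)$, so it suffices to upper bound how many active units the semi-online solution places inside $R_i := [\lfloor \theta_{i-1}\rfloor + 1, d^*)$ by time $\lfloor \theta_i \rfloor$. Using the bijection of \Cref{lem:onlineofflinemapping}, the semi-online batches present at time $\lfloor \theta_i \rfloor$ correspond to the online batches with start time at most $\lfloor \theta_i \rfloor$, and among these only the ones whose span can reach $[r^*, d^*)$ matter; a batch with $t(b) \leq r^* - 2T$ ends by $t(b) + 2T \leq r^*$ and contributes nothing. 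Hence the relevant batches come from $B_0 \cup B_1 \cup \cdots \cup B_i$, and I would bound
\[
A_{I_{\lfloor \theta_i \rfloor}}(\lfloor \theta_{i-1}\rfloor + 1, d^*) \leq \sum_{j=0}^{i} \sum_{b \in B_j} \big| [t(b)-T, t(b)+2T) \cap R_i \big|,
\]
the single difference from the online counting being that each semi-online interval is the online span $[t(b), t(b)+2T)$ enlarged by $T$ on the left.

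The key geometric observation controls this extra left segment $[t(b)-T, t(b))$. For any batch in $B_0 \cup \cdots \cup B_{i-1}$ we have $t(b) \leq \lfloor \theta_{i-1}\rfloor < \lfloor \theta_{i-1}\rfloor + 1$, so its extra segment lies entirely left of $R_i$ and the batch contributes exactly $\lambda_i(b) = \big|[t(b), t(b)+2T)\cap R_i\big|$. In particular, every $b \in B_0$ contributes exactly $\lambda_i(b)$, which yields the $\sum_{b\in B_0}\lambda_i(b)$ term appearing in both cases. Only the batches starting at or after $\lfloor \theta_{i-1}\rfloor + 1$, namely those of $B_i$ (and, when $i = 1$, of $B_1$), can have their extra segment intrude into $R_i$.

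For $i = 1$ I would keep the bound tight. Here $R_1 = [r^*, d^*)$, the $B_0$ batches contribute $\lambda_1(b)$ as above, and a batch $b \in B_1$ contributes $\lambda_1(b)$ plus the length of $[t(b)-T, t(b)) \cap [r^*, d^*)$, which equals $\min\{T, t(b) - r^*\}$. I would bound this extra piece by $L/2 = d^* - \theta_1$ through the two defining cases of $\theta_1$: if $d^* - r^* \geq 2T$ it is at most $T = L/2$, and if $d^* - r^* < 2T$ it is at most $t(b) - r^* \leq \theta_1 - r^* = (d^*-r^*)/2 = L/2$. Summing gives the claimed $\sum_{b\in B_0}\lambda_1(b) + \sum_{b\in B_1}(\lambda_1(b) + L/2)$.

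For $i \geq 2$ a uniform coarse bound is enough for all of $B_1, \ldots, B_i$: the contribution of any batch to $R_i$ is at most the length $|R_i| = d^* - \lfloor \theta_{i-1}\rfloor - 1 < d^* - \theta_{i-1} = 2^{-(i-1)} L = 2 \cdot 2^{-i} L$, where I use $d^* - \theta_k = 2^{-k}L$. Thus each of the $\sum_{j=1}^{i} |B_j|$ such batches contributes at most $2\cdot 2^{-i}L$, while the $B_0$ batches still contribute exactly $\lambda_i(b)$, producing the second bound. The main obstacle, and the only place that genuinely needs care, is exactly this extra left segment of length $T$ that distinguishes the semi-online interval from the online span: one must argue cleanly which batches' segments reach $R_i$, extract the precise constant $L/2$ (rather than a loose $T$) for $i=1$, and handle the floor/strict-inequality bookkeeping that makes both $|R_i| < 2\cdot 2^{-i}L$ and $\min\{T, t(b)-r^*\} \leq L/2$ hold.
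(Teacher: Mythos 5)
Your proposal is correct and follows essentially the same route as the paper's proof: reduce to the semi-online active units via \Cref{lem:semi_job_number}, map semi-online intervals to online batches via \Cref{lem:onlineofflinemapping}, bound the extra left segment of a $B_1$-batch by $L/2$ through the two cases of $\theta_1$, and use the coarse per-batch bound $2\cdot 2^{-i}L$ for $i\geq 2$. Your version is slightly more explicit than the paper's about which batches' left segments can actually intrude into $[\lfloor\theta_{i-1}\rfloor+1,d^*)$, but this is an elaboration of the same argument rather than a different approach.
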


\begin{proof}

In this proof, we apply \Cref{lem:semi_job_number} and count the number of active units after $\theta_{i - 1}$ by $I_{\lfloor \theta_{i} \rfloor}$. Note that by \Cref{lem:onlineofflinemapping}, each $3T$ length rent interval in $I_{\lfloor \theta_{i} \rfloor}$ corresponds to an online \Batch. 

First, let us consider the case $i=1$. $I_{\lfloor \theta_{i} \rfloor}$ corresponds to the online batches with $t(b) \leq I_{\lfloor \theta_{i} \rfloor}$. Note that the ending time of $b$ and its corresponding semi-online rent interval are both $t(b)+2T$. Thus, $A_{I_{\lfloor \theta_{i} \rfloor}} (\lfloor \theta_{i - 1} \rfloor + 1 , d^*)$ corresponds to $B_0$ and $B_1$.


For $i = 1$, we keep the $B_0$ straightforward and calculate the upper bound of active units provided by the corresponding semi-online batch for each batch $b$ in $B_1$. Note that the semi-online rent set spans $[t(b) - T, t(b) + 2T)$. We split the $3T$ interval into first $T$ and last $2T$: the latter could be upper bounded by $\lambda_1(b)$ using \Cref{lem:L}, and we could find that the former is at most $L/2$:
    \begin{itemize}
        \item When $d^* - r^* < 2T$, $t(b) - r^* < (d^* - r^*) / 2$, then $\min\left\{T, t(b) - r^*\right\} < L / 2$.
        \item When $d^* - r^* \geq 2T$, $L = 2T$, and then $\min\left\{T, t(b) - r^*\right\} \leq T = L / 2$.
\end{itemize}
So we can conclude that
    $$|\mathcal J_1| \leq \sum_{b\in B_0} \lambda_1(b) + \sum_{b \in B_1} \left(\lambda_1(b) + L/2\right).$$

For the case $i \geq 2$, all jobs released in sub-interval $i$ can be allocated at most $[\theta_{i-1} + 1, d^*)$, and by \Cref{lem:L} each semi-online batch covers at most $2 \cdot (2^{-i} \cdot L)$. Like $i=1$, the inequality follows direct counting on $\cup_{j = 0}^{i}B_j$.
\end{proof}

Thus far, we are ready to prove \Cref{thm:online:lemma1} by a charging argument. 

\begin{proof}[Proof of \Cref{thm:online:lemma1}]
 Recall that in \Cref{lem:activeunits} we have 
\begin{equation}
\label{eqn:ALower}
\begin{aligned}
A(r^*,d^*)\geq 
    &\sum_{b\in B_0} \left(2 \max\{\lambda_1(b) - T, 0\} + 2 \lambda_1(b)\right) 
    + \sum_{b \in B_1} \left(2\lambda_1(b) + 2\min\{T, \lambda_1(b)\}\right) \\
    &+ \sum_{i \geq 2} 4 \cdot(2 ^ {-i} \cdot L) \cdot |B_i|.
\end{aligned}
\end{equation}
Also, by \Cref{lem:job_number_general},
\begin{equation}
    \label{eqn:JUpper}
    \calJ(r^*,d^*) \leq
    \sum_{b\in B_0} \lambda_1(b)
    + \sum_{b \in B_1} \left(\lambda_1(b) + L / 2 \right)
    + \sum_{i \geq 2} \left(\sum_{b\in B_0} \lambda_i(b) + 2 \cdot (2 ^ {-i} \cdot L) \cdot \sum_{j=1}^{i} |B_j|\right).
\end{equation}

Using these two inequalities, we charge the upper bound of $\calJ(r^*, d^*)$ and the lower bound of $A(r^*, d^*)$ to each $b \in B$. We prove that for each $b$, the charged amount of $\calJ(r^*, d^*)$'s upper bound is at most $A(r^*, d^*)$'s lower bound. 

First, for $b \in B_0$, the contribution of $b$ to the lower bound of $A(r^*, d^*)$ (i.e., RHS of \Cref{eqn:ALower}) is: 
$
2 \max\{\lambda_1(b) - T, 0\} + 2 \lambda_1(b).
$
The contribution to the upper bound of $\calJ(r^*,d^*)$ (i.e., RHS of \Cref{eqn:JUpper}) is: 
$
\lambda_1(b) + \sum_{i \geq 2} \lambda_i(b).
$
Note that $b \in B_0$ only contributes to a prefix of $[r, d)$, it is easy to see that $\lambda_i(b) \leq 2 ^ {1 - i} \lambda_1(b)$, and thus
$$
\lambda_1(b) + \sum_{i \geq 2} \lambda_i(b) \leq \lambda_1(b) + \sum_{i \geq 2} 2^{1-i} \lambda_1(b) < 2\lambda_1(b).
$$
We are done for $b\in B_0$. 

Second, for $b \in B_1$, the contribution of $b$ to the lower bound of $A(r^*, d^*)$ (i.e., RHS of \Cref{eqn:ALower}) is: 
$
2\lambda_1(b) + 2\min\{T, \lambda_1(b)\}.
$
The contribution to the upper bound of $\calJ(r^*,d^*)$ (i.e., RHS of \Cref{eqn:JUpper}) is: 
$
\lambda_1(b) + L/2 + \sum_{i \geq 2} 2 \cdot (2 ^ {-i} \cdot L).
$
Then, we have 
\begin{align*}
    \lambda_1(b) + L/2 + \sum_{i \geq 2} 2 \cdot (2 ^ {-i} \cdot L) 
    &< \lambda_1(b) + L/2 + 2\cdot \left(2 ^ {-1} \cdot L\right) \\
    &\leq \lambda_1(b) + L/2 + 2 \cdot \min\left\{T, \lambda_1(b)\right\} \\
    &\leq 2\lambda_1(b) + 2\min\{T, \lambda_1(b)\}.
\end{align*}
The last inequality holds by \Cref{lem:L}. Therefore, we are done for $b\in B_1$. 

Finally, for $b \in B_{i\geq 2}$, the contribution of $b$ to the lower bound of $A(r^*, d^*)$ (i.e., RHS of \Cref{eqn:ALower}) is: 
$
4 \cdot(2 ^ {-i} \cdot L).
$
The contribution to the upper bound of $\calJ(r^*,d^*)$ (i.e., RHS of \Cref{eqn:JUpper}) is: 
$
\sum_{i' \geq i} 2 \cdot (2 ^ {-i'} \cdot L).
$
We are done because 
$
\sum_{i' \geq i} 2^{-i'} < 2^{1 - i}.
$
Summing up three parts, we have proved that $A(r^*, d^*) \geq \mathcal J(r^*, d^*)$.

\end{proof}

\subsection{A Remark on Running Time.} 

We only need to recalculate ${\sf SemiOnline}$ if the job set gets updated, and the procedure of reconstructing the rent set in ${\sf SemiOnline}$ can be maintained incrementally. Thus, it is possible to implement the algorithm calling $\EDF$ at most $n + \OPT \leq 2n$ times, and hence achieve a worst case guarantee of $O(n ^ 2 \log n)$. Also, note that a job can influence the calculation of ${\sf SemiOnline}$ for at most $O(T)$ time units, so the ${\sf SemiOnline}$ can also update in $O(n w \log w)$ if there are at most $O(w)$ jobs within any interval of length $O(T)$. Therefore, the online algorithm is efficient in terms of worst case guarantee and also average online updating.

\section{Online Rent Minimization with Delay}

In the version with delay, we are also given an online released job set $\mathcal{J}$ and a rent length $T$. We aim to minimize the number of rents needed to process all jobs. The notations are the same as in the Online Rent Minimization problem. Moreover, we are given a nonnegative integer $\lambda$ as the delay parameter. It means that if we rent a machine at time $t$, we will have an active machine at $[t+\lambda,t+\lambda +T)$. Note that it is impossible to serve an unknown emergency job with $d_j-r_j \leq \lambda$ online; following Chen and Zhang~\cite{chen_online_2022}, we require that the active time $d_j - r_j$ is at least $\lambda + 1$. 


We use the following reduction lemma and our $6$-competitive no-delay algorithm as a black box to prove \Cref{thm:timecritical}. Chen and Zhang~\cite{chen_online_2022} also mention this approach.

\begin{lemma}[Reduction]
    \label{lambda}
    If $\ALG(\mathcal{J}) \leq \Gamma \cdot \OPT(\mathcal{J})$ for every job set $\mathcal{J}$, we have an algorithm $\ALG_\lambda$ that guarantees $$\ALG_\lambda(\mathcal{J}, \lambda) \leq \Gamma \cdot (\lambda + 1) \cdot \OPT(\mathcal{J}),$$ if $\forall j \in \mathcal{J}$, $d_j - r_j \geq \lambda + 1$.
\end{lemma}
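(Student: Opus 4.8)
The plan is to reduce the delay problem to the no-delay problem by \emph{shortening deadlines}. Define a modified (no-delay) instance $\calJ'$ that keeps every release time $r_j$ but replaces each deadline by $d_j' = d_j - \lambda$; the assumption $d_j - r_j \geq \lambda+1$ guarantees $d_j' > r_j$, so every job still has a nonempty window and $\calJ'$ is a legitimate no-delay instance. The online algorithm $\ALG_\lambda$ then simply runs the given $\Gamma$-competitive no-delay algorithm $\ALG$ on $\calJ'$, using real time as $\ALG$'s no-delay clock: a job is fed to $\ALG$ exactly at its real release time $r_j$ (releases are unchanged), and whenever $\ALG$ opens a rent whose no-delay active window is $[t,t+T)$, we open one delay rent at time $t$, whose actual active window is $[t+\lambda, t+\lambda+T)$. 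Finally, if $\ALG$ assigns a job to the no-delay slot $s\in[r_j, d_j')$, we assign it to the real slot $s+\lambda$.

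First I would check that $\ALG_\lambda$ is a valid online algorithm and is feasible. Causality holds because each delay rent is opened at the current time $t$ (never in the past), and each assignment to real slot $s+\lambda$ is decided no later than real time $s$, once the covering rent is already committed. Feasibility holds because $\ALG$ is feasible on $\calJ'$ and shifting its whole schedule by $+\lambda$ maps each served slot $s\in[r_j,d_j')$ to $s+\lambda\in[r_j,d_j)$, which is exactly covered by the correspondingly shifted active window $[t+\lambda,t+\lambda+T)$. Since the rents of $\ALG_\lambda$ are in bijection with the rents $\ALG$ makes on $\calJ'$, this gives $\ALG_\lambda(\calJ,\lambda) = \ALG(\calJ') \leq \Gamma\cdot\OPT(\calJ')$.

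The crux is therefore the purely offline comparison $\OPT(\calJ') \leq (\lambda+1)\,\OPT(\calJ)$, where I use that the offline optimum is the same for the delay and no-delay versions (offline one may simply rent $\lambda$ steps earlier for free, so the delay is irrelevant to $\OPT$). I would prove this constructively: take an optimal no-delay rent set $I$ for $\calJ$ together with a feasible assignment $\sigma$, and form $I' = \bigcup_{k=0}^{\lambda}(I-k)$, i.e.\ $\lambda+1$ copies of $I$ shifted left by $0,1,\dots,\lambda$, so that $|I'| = (\lambda+1)\OPT(\calJ)$. A job originally served at slot $\sigma(j)$ is re-served at slot $\max\{r_j,\,\sigma(j)-\lambda\}=\sigma(j)-k \in [r_j,d_j')$ with $k=\min\{\lambda,\,\sigma(j)-r_j\}$, and this slot is covered by the $k$-th shifted copy of the original rent covering $\sigma(j)$. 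Because $\sigma$ was injective, the induced assignment inside each shifted copy remains injective and stays within the shifted windows, so $I'$ is feasible for $\calJ'$; I would verify this cleanly through the Hall-type condition of \Cref{lem:fea}. Chaining the three bounds yields $\ALG_\lambda(\calJ,\lambda)\le \Gamma(\lambda+1)\OPT(\calJ)$, as claimed.

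I expect the main obstacle to be exactly this feasibility argument for $I'$: one must confirm that after the left-shifts no machine-slot is overloaded, i.e.\ that routing each job to copy $k=\min\{\lambda,\sigma(j)-r_j\}$ keeps the per-copy assignments injective and confined to the shifted active windows. A reassuring sanity check is the tight instance consisting of many identical jobs with window length exactly $\lambda+1$ (all released at $0$, deadline $\lambda+1$): there $\OPT(\calJ')=(\lambda+1)\OPT(\calJ)$, so the factor $\lambda+1$ produced by this reduction cannot be improved.
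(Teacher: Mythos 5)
Your reduction is the same as the paper's: you build the no-delay instance by shortening every deadline to $d_j-\lambda$, run the $\Gamma$-competitive no-delay algorithm on it, shift each of its rents forward by $\lambda$, and argue feasibility by noting that the Hall-type condition of \Cref{lem:fea} is preserved under the shift. The one place you diverge is the offline inequality $\OPT(\calJ') \leq (\lambda+1)\cdot\OPT(\calJ)$: the paper imports this as a black box from Chen and Zhang~\cite{chen_online_2022} (\Cref{lem:ref}), whereas you prove it from scratch by taking $\lambda+1$ left-shifted copies of an optimal rent set for $\calJ$ and rerouting each job from slot $\sigma(j)$ to $\max\{r_j,\sigma(j)-\lambda\}$ on the copy shifted by $k_j=\min\{\lambda,\sigma(j)-r_j\}$. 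That argument is correct --- the hypothesis $d_j-r_j\geq\lambda+1$ is exactly what guarantees the new slot lies in $[r_j, d_j-\lambda)$, and injectivity survives because jobs landing on the same copy of the same machine had distinct slots under $\sigma$ and are shifted by the same amount --- so your write-up has the minor advantage of being self-contained where the paper relies on a citation. Your tight example (identical jobs with window length exactly $\lambda+1$) is also a valid check that the factor $\lambda+1$ in this reduction cannot be improved.
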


\subparagraph*{Construction of $\mathcal{J}_{\lambda = 0}$ and $\ALG_\lambda$.}
$\mathcal{J}_{\lambda = 0}$ is a modified job set by $\mathcal{J}$. For any job $j$ in $\mathcal{J}$, we construct a corresponding $j'$ in $\mathcal{J}_{\lambda = 0}$ such that $r_{j'} = r_j$ and $d_{j'}=d_j - \lambda$. Then, we construct $\ALG_\lambda$. At any time $t$, $\ALG_\lambda(\mathcal{J})$ will make the rent decision as the decisions of $\ALG(\mathcal{J}_{\lambda = 0})$ made at $t$. That is, if there is a rent interval $[t,t+T)$ in $\ALG(\mathcal{J}_{\lambda = 0})$, we will have a rent interval $[t+\lambda,t+T + \lambda)$ in $\ALG_\lambda(\mathcal{J})$. Because we do not change the releasing time, $\ALG_\lambda$ is also an online algorithm. Clearly, the cost of $\ALG_\lambda(\mathcal{J})$ is the same as that of $\ALG(\mathcal{J}_{\lambda = 0})$.
\begin{lemma}
        \label{lem:delaycost}
    $\ALG_\lambda(\mathcal{J},\lambda) \leq \Gamma (\lambda + 1) \cdot \OPT(\mathcal{J})$.
\end{lemma}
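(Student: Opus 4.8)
The plan is to reduce \Cref{lem:delaycost} to a single comparison between offline optima, namely $\OPT(\mathcal{J}_{\lambda=0}) \le (\lambda+1)\,\OPT(\mathcal{J})$, and then invoke the no-delay algorithm as a black box. Indeed, the construction of $\ALG_\lambda$ gives $\ALG_\lambda(\mathcal{J},\lambda) = \ALG(\mathcal{J}_{\lambda=0})$ (every rent of $\ALG$ on $\mathcal{J}_{\lambda=0}$ is replayed with its active window pushed back by $\lambda$, which exactly compensates for the delay), and the hypothesis of the Reduction Lemma (\Cref{lambda}) gives $\ALG(\mathcal{J}_{\lambda=0}) \le \Gamma\cdot\OPT(\mathcal{J}_{\lambda=0})$. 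So once the optima comparison is established, chaining the three bounds yields $\ALG_\lambda(\mathcal{J},\lambda) \le \Gamma(\lambda+1)\OPT(\mathcal{J})$.

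To compare the optima, I would first translate the deadline-shrinking into a statement about job counts on intervals. Since $r_{j'}=r_j$ and $d_{j'}=d_j-\lambda$, and since the assumption $d_j - r_j \ge \lambda+1$ guarantees $r_{j'} < d_{j'}$ for every job, one checks directly that $\mathcal{J}_{\lambda=0}(r^*, d^*) = \mathcal{J}(r^*, d^*+\lambda)$ for every $r^* \le d^*$; this is the only place the hypothesis $d_j - r_j \ge \lambda+1$ is used. Next I would take an optimal rent set $I$ for $\mathcal{J}$, so $|I| = \OPT(\mathcal{J})$ and $A_I(r^*, e^*) \ge |\mathcal{J}(r^*, e^*)|$ for all $r^* \le e^*$ by \Cref{lem:fea}, and build a candidate rent set $I'$ for $\mathcal{J}_{\lambda=0}$ by taking, for each rent $[a, a+T) \in I$, the $\lambda+1$ left-shifted copies $[a-k, a-k+T)$ for $k = 0, 1, \dots, \lambda$. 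This immediately gives $|I'| = (\lambda+1)|I| = (\lambda+1)\OPT(\mathcal{J})$, so it only remains to prove that $I'$ is feasible for $\mathcal{J}_{\lambda=0}$.

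Feasibility is where the real work lies, and I expect it to be the main (though not deep) obstacle. The clean route is the pointwise identity $A_{I'}(t) = \sum_{k=0}^{\lambda} A_I(t+k)$, which holds because a copy shifted left by $k$ is active at $t$ exactly when the original rent is active at $t+k$. Summing this over $t \in [r^*, d^*)$ and exchanging the order of summation, every active unit counted by $A_I(s)$ with $s \in [r^*, d^*+\lambda)$ receives multiplicity at least one, so $A_{I'}(r^*,d^*) \ge A_I(r^*, d^*+\lambda)$. Combining this with feasibility of $I$ and the interval identity gives $A_{I'}(r^*,d^*) \ge A_I(r^*,d^*+\lambda) \ge |\mathcal{J}(r^*,d^*+\lambda)| = |\mathcal{J}_{\lambda=0}(r^*,d^*)|$ for all $r^* \le d^*$, so \Cref{lem:fea} certifies that $I'$ is feasible for $\mathcal{J}_{\lambda=0}$. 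Hence $\OPT(\mathcal{J}_{\lambda=0}) \le |I'| = (\lambda+1)\OPT(\mathcal{J})$, which closes the reduction. The only subtleties to watch are the boundary counting in the double sum (ensuring each index $s$ in the extended interval $[r^*, d^*+\lambda)$ is covered at least once) and confirming that \emph{left}-shifting is the correct direction, which is intuitively right since shrinking deadlines by $\lambda$ forces every job to be served up to $\lambda$ time units earlier.
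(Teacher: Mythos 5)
Your proof is correct and follows the same skeleton as the paper's: both establish $\ALG_\lambda(\mathcal{J},\lambda)=\ALG(\mathcal{J}_{\lambda=0})\le\Gamma\cdot\OPT(\mathcal{J}_{\lambda=0})\le\Gamma(\lambda+1)\OPT(\mathcal{J})$. The only difference is the middle-to-last step: the paper imports the inequality $\OPT(\mathcal{J}_{\lambda=0})\le(\lambda+1)\OPT(\mathcal{J})$ as a black-box lemma from Chen and Zhang, whereas you prove it directly by taking $\lambda+1$ left-shifted copies of each optimal rent and verifying feasibility for $\mathcal{J}_{\lambda=0}$ via the Hall-type condition of \Cref{lem:fea}. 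Your argument for that sub-lemma is sound --- the identity $\mathcal{J}_{\lambda=0}(r^*,d^*)=\mathcal{J}(r^*,d^*+\lambda)$ uses the hypothesis $d_j-r_j\ge\lambda+1$ exactly where needed, the pointwise identity $A_{I'}(t)=\sum_{k=0}^{\lambda}A_I(t+k)$ is right, and each $s\in[r^*,d^*+\lambda)$ is indeed hit at least once in the double sum when $r^*<d^*$ (the case $r^*=d^*$ being vacuous) --- so your writeup buys self-containedness at the cost of a paragraph, but proves nothing different from what the paper asserts.
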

\begin{proof}
    We have  $\ALG_\lambda(\mathcal{J},\lambda) = \ALG(\mathcal{J}_{\lambda = 0})$ by the construction. Then, let us refer to the lemma for $\OPT$ in Chen and Zhang~\cite{chen_online_2022}:
    \begin{lemma}[\cite{chen_online_2022}]
    \label{lem:ref}
        $\OPT(\mathcal{J}_{\lambda = 0}) \leq (\lambda + 1) \cdot \OPT(\mathcal{J}).$
    \end{lemma}
    By Lemma \ref{lem:ref}, $$\ALG_\lambda(\mathcal{J},\lambda) = \ALG(\mathcal{J}_{\lambda = 0}) \leq \Gamma \cdot \OPT(\mathcal{J}_{\lambda = 0}) \leq \Gamma (\lambda + 1) \cdot \OPT(\mathcal{J}).$$
\end{proof}

\begin{lemma}
    \label{lem:delayfeasible}
    $\ALG_\lambda(\mathcal{J})$ is feasible to process all jobs in $\mathcal{J}$.
\end{lemma}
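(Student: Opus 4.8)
The plan is to get feasibility of $\ALG_\lambda(\mathcal{J})$ essentially for free from the feasibility of $\ALG(\mathcal{J}_{\lambda = 0})$, which \Cref{thm:main} already guarantees, by transporting a schedule through a uniform time shift. Before doing anything else I would check that $\mathcal{J}_{\lambda = 0}$ is even a legal instance to feed to $\ALG$: each constructed job $j'$ has window $d_{j'}-r_{j'}=(d_j-\lambda)-r_j=d_j-r_j-\lambda$, which is at least $1$ exactly because we assumed $d_j-r_j\geq\lambda+1$. Hence $\ALG$ may be run on $\mathcal{J}_{\lambda = 0}$ and, being a correct algorithm for unit-size online rent minimization, returns a rent set $I_0\defeq\ALG(\mathcal{J}_{\lambda = 0})$ together with a feasible schedule $\sigma'$ of $\mathcal{J}_{\lambda = 0}$ on $I_0$. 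The whole task then reduces to pushing $\sigma'$ forward to a feasible schedule of $\mathcal{J}$ on $\ALG_\lambda(\mathcal{J})$.

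The key step is this forward push by a $\lambda$-shift. Recall that $\ALG_\lambda$ turns each rent $[t,t+T)\in I_0$ into $[t+\lambda,t+T+\lambda)$, and that $j\in\mathcal{J}$ corresponds to $j'\in\mathcal{J}_{\lambda = 0}$ with $r_{j'}=r_j$ and $d_{j'}=d_j-\lambda$. I would define $\sigma$ as follows: if $\sigma'$ places $j'$ in the slot $[s,s+1)$ on a machine opened by some $[t,t+T)\in I_0$ (so $t\leq s<t+T$ and $r_{j'}\leq s\leq d_{j'}-1$), then $\sigma$ places $j$ in the slot $[s+\lambda,s+\lambda+1)$ on the machine opened by $[t+\lambda,t+T+\lambda)$. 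Two checks then suffice. The machine is active at the shifted slot, since $t+\lambda\leq s+\lambda<t+T+\lambda$; and the shifted slot sits inside $j$'s window, since $r_j=r_{j'}\leq s\leq s+\lambda$ on the left and $s+\lambda\leq(d_{j'}-1)+\lambda=(d_j-\lambda-1)+\lambda=d_j-1$ on the right. This is exactly where the hypothesis $d_j-r_j\geq\lambda+1$ is consumed a second time, guaranteeing the upper endpoint survives the shift.

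Finally I would rule out collisions: since $\sigma'$ is valid it assigns distinct jobs of $\mathcal{J}_{\lambda = 0}$ to distinct (machine, slot) pairs, and the map sending each rent to its $\lambda$-shifted copy and each slot $[s,s+1)$ to $[s+\lambda,s+\lambda+1)$ is a bijection, so $\sigma$ inherits injectivity and no two jobs of $\mathcal{J}$ ever share a (machine, slot) pair. This establishes that $\sigma$ is feasible and proves the lemma. I do not expect a genuine obstacle here; the entire argument is bookkeeping around a rigid shift, and the only delicate point is keeping the window check intact, which the $d_j-r_j\geq\lambda+1$ assumption handles. As a cleaner but equivalent alternative, one could skip exhibiting $\sigma$ and instead verify Hall's condition via \Cref{lem:fea}: writing $I_\lambda=\ALG_\lambda(\mathcal{J})$, one has $A_{I_\lambda}(r^*,d^*)=A_{I_0}(r^*-\lambda,d^*-\lambda)$ and $|\mathcal{J}(r^*,d^*)|=|\mathcal{J}_{\lambda = 0}(r^*,d^*-\lambda)|\leq|\mathcal{J}_{\lambda = 0}(r^*-\lambda,d^*-\lambda)|$, so feasibility of $I_0$ for $\mathcal{J}_{\lambda = 0}$ directly yields $A_{I_\lambda}(r^*,d^*)\geq|\mathcal{J}(r^*,d^*)|$ for all $r^*\leq d^*$.
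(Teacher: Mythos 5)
Your proposal is correct. Your primary argument and the paper's proof rest on the same one-line idea --- everything (rents, slots, deadlines) shifts rigidly by $\lambda$ --- but they package it differently: you transport an explicit feasible schedule $\sigma'$ of $\mathcal{J}_{\lambda=0}$ on $I_0$ to a schedule $\sigma$ of $\mathcal{J}$ on the shifted rent set, checking machine-activity, window membership, and injectivity slot by slot, whereas the paper never exhibits a schedule and instead verifies the Hall-type condition of \Cref{lem:fea}: for $d^*-r^*\geq\lambda+1$ it writes $\mathcal{J}(r^*,d^*)=\mathcal{J}_{\lambda=0}(r^*,d^*-\lambda)\leq A_{I}(r^*,d^*-\lambda)=A_{I'}(r^*+\lambda,d^*)\leq A_{I'}(r^*,d^*)$, handling $d^*-r^*\leq\lambda$ separately since then $\mathcal{J}(r^*,d^*)=0$. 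Your closing ``alternative'' is essentially this proof, differing only in shifting the window backwards to $[r^*-\lambda,d^*-\lambda)$ and using containment of job sets rather than monotonicity of $A_{I'}$ in the interval. The explicit construction buys a self-contained, Hall-free argument (and makes visible that the shifted schedule is still computable online, since the decision for slot $s+\lambda$ depends only on information available at time $s$); the paper's route is shorter because the feasibility machinery of \Cref{lem:fea} is already in place. Your window check $s+\lambda\leq d_j-1$, which consumes the hypothesis $d_j-r_j\geq\lambda+1$, is exactly the point where the argument would otherwise break, and you handle it correctly.
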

\begin{proof}
    We apply \Cref{lem:fea} to prove the feasibility. Let $I$ be the rent set of $\ALG(\mathcal{J}_{\lambda = 0})$. Given that $\ALG(\mathcal{J}_{\lambda = 0})$ is feasible, we know that $\forall r^* < d^*$, $\mathcal{J}_{\lambda = 0}(r^*,d^*) \leq A_{I}(r^*,d^*)$. Next, we consider $\ALG_\lambda(\mathcal{J})$ and let its rent set be $I'$.  $\forall r^*<d^*$ such that $d^*-r^* \leq \lambda$, $\mathcal{J}(r^*,d^*)=0$ by definition. For those $d^* - r^* \geq \lambda+1$, 
    $$
    \mathcal{J}(r^*,d^*) = \mathcal{J}_{\lambda = 0}(r^*,d^* - \lambda) \leq A_{I}(r^*,d^* - \lambda).
    $$ 
    By definition, the active unit in $A_{I}(r^*,d^* - \lambda)$ will be postponed to $A_{I'}(r^* + \lambda, d^*)$. Thus, we have
    $$
    \mathcal{J}(r^*,d^*) \leq A_{I}(r^*,d^* - \lambda) = A_{I'}(r^* + \lambda, d^*) \leq A_{I'}(r^*, d^*).
    $$
    Therefore, we can use \Cref{lem:fea} to show the feasibility of $I'$ in both cases. 
\end{proof}

We now can conclude the correctness of \Cref{lambda} from \Cref{lem:delaycost} and \Cref{lem:delayfeasible},  \Cref{thm:timecritical} also follows directly. 

\timecrit*

\section{Conclusion and Future Work}
In conclusion, our main contribution is a $6$-competitive algorithm for the Online Rent Minimization problem under unit-size jobs, which follows the oracle-based framework. 

Since the Online Rent Minimization problem is a generalization of the Online Machine Minimization problem, where we have an optimal $e$-competitive algorithm, one major question is: Is the Rent Minimization problem strictly harder than the Machine Minimization problem? 

On the other hand, we are also interested in the power of oracle-based algorithms. Note that the optimal $e$-competitive algorithm for Machine Minimization follows the oracle-based framework. It is interesting to ask: What is the best competitive ratio we can achieve for the Online Rent Minimization problem by using the oracle-based framework? The Semi-Online captures our current understanding of the possible range of optimal solutions, so replacing it with an optimal oracle cannot improve ratio directly by the same argument in the paper. Is it possible to obtain a better ratio with access to an optimal oracle?





\section*{Acknowledgement}

The authors would like to thank Minming Li, Pinyan Lu, and Biaoshuai Tao for many insightful discussions on the research topic of this paper. Yuhao Zhang is supported by National Natural Science Foundation of China, Grant No. 62102251.

\bibliographystyle{plainnat}
\bibliography{ref}

\end{document}